\renewcommand\footnotetextcopyrightpermission[1]{}
\pgfplotsset{compat=1.3}
\author{Immanuel Trummer, Junxiong Wang, Deepak Maram, Samuel Moseley, Saehan Jo, Joseph Antonakakis}
\email{{itrummer,jw2544,sm2686,sjm352,sj683,jma353}@cornell.edu}
\affiliation{%
  \institution{Cornell University, Ithaca (NY)}
}
\begin{document}

\fancyhead{}

\title{SkinnerDB: Regret-Bounded Query Evaluation\\via Reinforcement Learning}

\begin{abstract}
SkinnerDB is designed from the ground up for reliable join ordering. It maintains no data statistics and uses no cost or cardinality models. Instead, it uses reinforcement learning to learn optimal join orders on the fly, during the execution of the current query. To that purpose, we divide the execution of a query into many small time slices. Different join orders are tried in different time slices. We merge result tuples generated according to different join orders until a complete result is obtained. By measuring execution progress per time slice, we identify promising join orders as execution proceeds. 

Along with SkinnerDB, we introduce a new quality criterion for query execution strategies. We compare expected execution cost against execution cost for an optimal join order. SkinnerDB features multiple execution strategies that are optimized for that criterion. Some of them can be executed on top of existing database systems. For maximal performance, we introduce a customized execution engine, facilitating fast join order switching via specialized multi-way join algorithms and tuple representations. 

We experimentally compare SkinnerDB's performance against various baselines, including MonetDB, Postgres, and adaptive processing methods. We consider various benchmarks, including the join order benchmark and TPC-H variants with user-defined functions. Overall, the overheads of reliable join ordering are negligible compared to the performance impact of the occasional, catastrophic join order choice.
\end{abstract}

\maketitle

\section{Introduction}
\label{introSec}


\begin{center}
\textit{``The consequences of an act affect the probability of its occurring again.''} --- B.F. Skinner.
\end{center}

Estimating execution cost of plan candidates is perhaps the primary challenge in query optimization~\cite{Lohman2014}. Query optimizers predict cost based on coarse-grained data statistics and under simplifying assumptions (e.g., independent predicates). If estimates are wrong, query optimizers may pick plans whose execution cost is sub-optimal by orders of magnitude. We present SkinnerDB, a novel database system designed from the ground up for reliable query optimization.


SkinnerDB maintains no data statistics and uses no simplifying cost and cardinality models. Instead, SkinnerDB learns (near-)optimal left-deep query plans \textit{from scratch} and on the fly, i.e.\ \textit{during} the execution of a given query. This distinguishes SkinnerDB from several other recent projects that apply learning in the context of query optimization~\cite{Krishnan2018, Marcus2018}: instead of learning from past query executions to optimize the next query, we learn from the current query execution to optimize the remaining execution of the current query. Hence, SkinnerDB does not suffer from any kind of generalization error across queries (even seemingly small changes to a query can change the optimal join order significantly). 

SkinnerDB partitions the execution of a query into many, very small time slices (e.g., tens of thousands of slices per second). Execution proceeds according to different join orders in different time slices. Result tuples produced in different time slices are merged until a complete result is obtained. After each time slice, execution progress is measured which informs us on the quality of the current join order. At the beginning of each time slice, we choose the join order that currently seems most interesting. In that choice, we balance the need for exploitation (i.e., trying join orders that worked well in the past) and exploration (i.e., trying join orders about which little is known). We use the UCT algorithm~\cite{Kocsis2006} in order to optimally balance between those two goals.

\begin{figure}[t]
\includegraphics{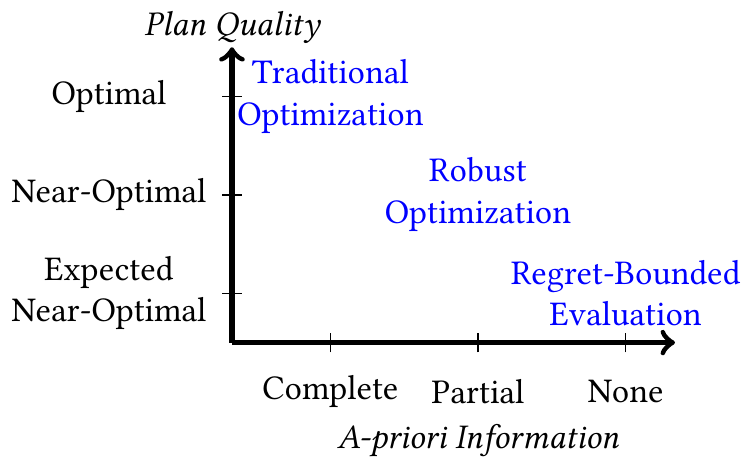}
\caption{Tradeoffs between a-priori information and guarantees on plan quality in query evaluation.\label{guaranteesVsAssumptionsFig}}
\end{figure}

Along with SkinnerDB, we introduce a new quality criterion for query evaluation methods. We measure the distance (additive difference or ratio) between expected execution time and execution time for an optimal join order. This criterion is motivated by formal regret bounds provided by many reinforcement learning methods. In the face of uncertainty, based on minimal assumptions, they still bound the difference between expected and optimal average decision quality. Traditional query optimization guarantee optimal plans, provided that complete information (e.g., on predicate selectivity and predicate correlations) is a-priori available. We assume that no a-priori information is available at the beginning of query execution (see Figure~\ref{guaranteesVsAssumptionsFig}, comparing different models in terms of assumptions and guarantees). Our scenario matches therefore the one considered in reinforcement learning. This motivates us to apply a similar quality criterion. The adaptive processing strategies used in SkinnerDB are optimized for that criterion.

SkinnerDB comes in multiple variants. Skinner-G sits on top of a generic SQL processing engine. Using optimizer hints (or equivalent mechanisms), we force the underlying engine to execute specific join orders on data batches. We use timeouts to limit the impact of bad join orders (which can be significant, as intermediate results can be large even for small base table batches). Of course, the optimal timeout per batch is initially unknown. Hence, we iterate over different timeouts, carefully balancing execution time dedicated to different timeouts while learning optimal join orders. Skinner-H is similar to Skinner-G in that it uses an existing database management system as execution engine. However, instead of learning new plans from scratch, it partitions execution time between learned plans and plans proposed by the original optimizer. 

Both, Skinner-G and Skinner-H, rely on a generic execution engine. However, existing systems are not optimized for switching between different join orders during execution with a very high frequency. Skinner-C exploits a customized execution engine that is tailored to the requirements of regret-bounded query evaluation. It features a multi-way join strategy that keeps intermediate results minimal, thereby allowing quick suspend and resume for a given join order. Further, it allows to share execution progress between different join orders and to measure progress per time slice at a very fine granularity (which is important to quickly obtain quality estimates for join orders). 

In our formal analysis, we compare expected execution time against execution time of an optimal join order for all Skinner variants. For sufficiently large amounts of input data to process and under moderately simplifying assumptions, we are able to derive upper bounds on the difference between the two metrics. In particular for Skinner-C, the ratio of expected to optimal execution time is for all queries upper-bounded by a low-order polynomial in the query size. Given misleading statistics or assumptions, traditional query optimizers may select plans whose execution time is higher than optimal by a factor that is exponential in the number of tables joined. The same applies to adaptive processing strategies~\cite{Tzoumas2008} which, even if they converge to optimal join orders over time, do not bound the overhead caused by single tuples processed along bad join paths.

SkinnerDB pays for reliable join ordering with overheads for learning and join order switching. In our experiments with various baselines and benchmarks, we study under which circumstances the benefits outweigh the drawbacks. When considering accumulated execution time on difficult benchmarks (e.g., the join order benchmark~\cite{Gubichev2015}), it turns out that SkinnerDB can beat even highly performance-optimized systems for analytical processing with a traditional optimizer. While per-tuple processing overheads are significantly lower for the latter, SkinnerDB minimizes the total number of tuples processed via better join orders.

We summarize our original scientific contributions:

\begin{itemize}
\item We introduce a new quality criterion for query evaluation strategies that compares expected and optimal execution cost.
\item We propose several adaptive execution strategies based on reinforcement learning. 
\item We formally prove correctness and regret bounds for those execution strategies.
\item We experimentally compare those strategies, implemented in SkinnerDB, against various baselines.
\end{itemize}

The remainder of this paper is organized as follows. We discuss related work in Section~\ref{relatedSec}. We describe the primary components of SkinnerDB in Section~\ref{overviewSec}. In Section~\ref{algorithmSec}, we describe our query evaluation strategies based on reinforcement learning. In Section~\ref{analysisSec}, we analyze those strategies formally, we prove correctness and performance properties. Finally, in Section~\ref{experimentsSec}, we describe the implementation in SkinnerDB and compare our approaches experimentally against a diverse set of baselines. The appendix contains additional experimental results.

\section{Related Work}
\label{relatedSec}

Our approach connects to prior work collecting information on predicate selectivity by evaluating them on data samples~\cite{Bruno2002, Chaudhuri2001, Haas1992, Haas2011, Karanasos2014a, Lipton1990a, Markl2013, Wu2016}. We compare in our experiments against a recently proposed representative~\cite{Wu2016}. Most prior approaches rely on a traditional optimizer to select interesting intermediate results to sample. They suffer if the original optimizer generates bad plans. The same applies to approaches for interleaved query execution and optimization~\cite{Aboulnaga2004a, Avnur2000, Babu2005} that repair initial plans at run time if cardinality estimates turn out to be wrong. Robust query optimization~\cite{Alyoubi2015, Alyoubi2016, Babcock2005, D.2008} assumes that predicate selectivity is known within narrow intervals which is often not the case~\cite{El-Helw2009}. Prior work~\cite{Dutt2014a, Dutt2014} on query optimization without selectivity estimation is based on simplifying assumptions (e.g., independent predicates) that are often violated.


Machine learning has been used to estimate cost for query plans whose cardinality values are known~\cite{Akdere2011, Li2012}, to predict query~\cite{Ganapathi} or workflow~\cite{Popescu2013} execution times, result cardinality~\cite{Malik2006, Malik2007}, or interference between query executions~\cite{Duggan2011}. LEO~\cite{Aboulnaga2004a, Stillger2001}, IBM's learning optimizer, leverages past query executions to improve cardinality estimates for similar queries. Ewen et al.~\cite{Ewen2005} use a similar approach for federated database systems. Several recent approaches~\cite{Krishnan2018, Marcus2018} use learning for join ordering. All of the aforementioned approaches learn from past queries for the optimization of future queries. To be effective, new queries must be similar to prior queries and this similarity must be recognizable. Instead, we learn \textit{during} the execution of a query. 




Adaptive processing strategies have been explored in prior work~\cite{Avnur2000, Deshpande2004, Deshpande2006a, Quanzhong2007a, Raman2003, Tzoumas2008, Viglas2003}. Our work uses reinforcement learning and is therefore most related to prior work using reinforcement learning in the context of Eddies~\cite{Tzoumas2008}. We compare against this approach in our experiments. Eddies do not provide formal guarantees on the relationship between expected execution time and the optimum. They never discard intermediate results, even if joining them with the remaining tables creates disproportional overheads. Eddies support bushy query plans in contrast to our approach. Bushy plans can in principle decrease execution cost compared to the best left-deep plan. However, optimal left-deep plans typically achieve reasonable performance~\cite{Gubichev2015}. Also, as we show in our experiments, reliably identifying near-optimal left-deep plans can be better than selecting bushy query plans via non-robust optimization.


Our work relates to prior work on filter ordering with regret bounds~\cite{Condon2009a}. Join ordering introduces however new challenges, compared to filter ordering. In particular, applying more filters can only decrease the size of intermediate results. The relative overhead of a bad filter order, compared to the optimum, grows therefore linearly in the number of filters. The overhead of bad join orders, compared to the optimum, can grow exponentially in the query size. This motivates mechanisms that bound join overheads for single data batches, as well as mechanisms to save progress for partially processed data batches. 

Worst-case optimal join algorithms~\cite{Ngo2012, Veldhuizen2012} bound cost as a function of worst-case query result size. We bound expected execution cost as a function of cost for processing an optimal join order. Further, prior work on worst-case optimal joins focuses on conjunctive queries while we support a broader class of queries, including queries with user-defined function predicates. Our approach applies to SQL with standard semantics while systems for worst-case optimal evaluation typically assume set semantics~\cite{Veldhuizen2012}. 

\section{Overview}
\label{overviewSec}

\tikzstyle{SkinnerComponent}=[anchor=center, draw=black, fill=blue!15, minimum width=1.6cm, align=center, font=\small]

\tikzstyle{SkinnerComponent2}=[anchor=center, draw=black, rounded corners=0.15cm, fill=red!15, minimum width=1.75cm, align=center, font=\small]

\tikzstyle{SkinnerIO}=[anchor=center, font=\small]

\tikzstyle{SkinnerFlow}=[ultra thick, draw]

\begin{figure}
\includegraphics{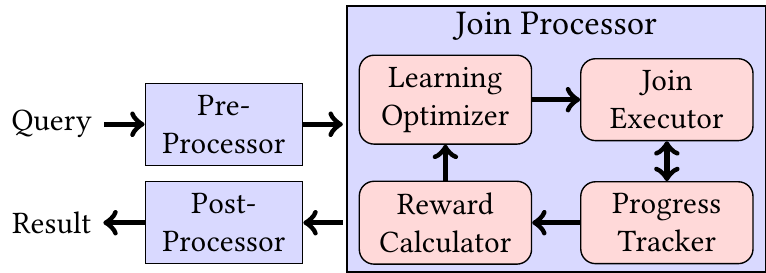}
\caption{Primary components of SkinnerDB.\label{architectureFig}}
\end{figure}

Figure~\ref{architectureFig} shows the primary components of SkinnerDB. This high-level outline applies to all of the SkinnerDB variants. 

The pre-processor is invoked first for each query. Here, we filter base tables via unary predicates. Also, depending on the SkinnerDB variant, we partition the remaining tuples into batches or hash them (to support joins with equality predicates). 

Join execution proceeds in small time slices. The join processor consists of several sub-components. The learning optimizer selects a join order to try next at the beginning of each time slice. It uses statistics on the quality of join orders that were collected during the current query execution. Selected join orders are forwarded to the join executor. This component executes the join order until a small timeout is reached. We add result tuples into a result set, checking for duplicate results generated by different join orders. The join executor can be either a generic SQL processor or, for maximal performance, a specialized execution engine. The same join order may get selected repeatedly. The progress tracker keeps track of which input data has been processed already. For Skinner-C, it even tracks execution state for each join order tried so far, and merges progress across join orders. At the start of each time slice, we consult the progress tracker to restore the latest state stored for the current join order. At the end of it, we backup progress achieved during the current time slice. The reward calculator calculates a reward value, based on progress achieved during the current time slice. This reward is a measure for how quickly execution proceeds using the chosen join order. It is used as input by the optimizer to determine the most interesting join order to try in the next time slice.

Finally, we invoke the post-processor, using the join result tuples as input. Post-processing involves grouping, aggregation, and sorting. In the next section, we describe the algorithms executed within SkinnerDB.



\section{Algorithms}
\label{algorithmSec}

We describe several adaptive processing strategies that are implemented in SkinnerDB. In Section~\ref{uctSub}, we introduce the UCT algorithm that all processing strategies are based upon. In Section~\ref{learningSub}, we describe how the UCT algorithm can generally be used to learn optimal join orders. In Section~\ref{genericSub}, we introduce a join order learning approach that can be implemented on top of existing SQL processing engines, in a completely non-intrusive manner. In Section~\ref{hybridSub}, we show how this strategy can integrate plans proposed by the original optimizer. In Section~\ref{customizedSub}, we propose a new query evaluation method that facilitates join order learning and the associated learning strategy. 

While we describe the following algorithms only for SPJ queries, it is straight-forward to add sorting, grouping, or aggregate calculations in a post-processing step (we do so in our actual implementation). Nested queries can be treated via decomposition~\cite{Neumann}.

\subsection{Background on UCT}
\label{uctSub}

Our method for learning optimal join orders is based on the UCT algorithm~\cite{Kocsis2006}. This is an algorithm from the area of reinforcement learning. It assumes the following scenario. We repeatedly make choices that result in rewards. Each choice is associated with reward probabilities that we can learn over time. Our goal is to maximize the sum of obtained rewards. To achieve that goal, it can be beneficial to make choices that  resulted in large rewards in the past (``exploitation'') or choices about which we have little information (``exploration'') to inform future choices. The UCT algorithm balances between exploration and exploitation in a principled manner that results in probabilistic guarantees. More precisely, assuming that rewards are drawn from the interval $[0,1]$, the UCT algorithm guarantees that the expected regret (i.e., the difference between the sum of obtained rewards to the sum of rewards for optimal choices) is in $O(\log(n))$ where $n$ designates the number of choices made~\cite{Kocsis2006}. 

We specifically select the UCT algorithm for several reasons. First, UCT has been applied successfully to problems with very large search spaces (e.g., planning Go moves~\cite{Gelly2012}). This is important since the search space for join ordering grows quickly in the query size. Second, UCT provides formal guarantees on cumulative regret (i.e., accumulated regret over all choices made). Other algorithms from the area of reinforcement learning~\cite{Feldman2014} focus for instance on minimizing simple regret (i.e., quality of the final choice). The latter would be more appropriate when separating planning from execution. Our goal is to interleave planning and execution, making the first metric more appropriate. Third, the formal guarantees of UCT do not depend on any instance-specific parameter settings~\cite{Domshlak2013}, distinguishing it from other reinforcement learning algorithms.

We assume that the space of choices can be represented as a search tree. In each round, the UCT algorithm makes a series of decisions that can be represented as a path from the tree root to a leaf. Those decisions result in a reward from the interval $[0,1]$, calculated by an arbitrary, randomized function specific to the leaf node (or as a sum of rewards associated with each path step). Typically, the UCT algorithm is applied in scenarios where materializing the entire tree (in memory) is prohibitively expensive. Instead, the UCT algorithm expands a partial search tree gradually towards promising parts of the search space. The UCT variant used in our system expands the materialized search tree by at most one node per round (adding the first node on the current path that is outside the currently materialized tree). 

Materializing search tree nodes allows to associate statistics with each node. The UCT algorithm maintains two counters per node: the number of times the node was visited and the average reward that was obtained for paths crossing through that node. If counters are available for all relevant nodes, the UCT algorithm selects at each step the child node $c$ maximizing the formula $r_c+w\cdot \sqrt{\log(v_p)/v_c}$ where $r_c$ is the average reward for $c$, $v_c$ and $v_p$ are the number of visits for child and parent node, and $w$ a weight factor. In this formula, the first term represents exploitation while the second term represents exploration. Their sum represents the upper bound of a confidence bound on the reward achievable by passing through the corresponding node (hence the name of the algorithm: UCT for Upper Confidence bounds applied to Trees). Setting $w=\sqrt{2}$ is sufficient to obtain bounds on expected regret. It can however be beneficial to try different values to optimize performance for specific domains~\cite{Domshlak2013}.

\subsection{Learning Optimal Join Orders}
\label{learningSub}


Our search space is the space of join orders. We consider all join orders except for join orders that introduce Cartesian product joins without need. Avoiding Cartesian product joins is a very common heuristic that is used by virtually all optimizers~\cite{Gubichev2015}. 
To apply the UCT algorithm for join ordering, we need to represent the search space as a tree. We assume that each tree node represents one decision with regards to the next table in the join order. Tree edges represent the choice of one specific table. The tree root represents the choice of the first table in the join order. All query tables can be chosen since no table has been selected previously. Hence, the root node will have $n$ child nodes where $n$ is the number of tables to join. Nodes in the next layer of the tree (directly below the root) represent the choice of a second table. We cannot select the same table twice in the same join order. Hence, each of the latter node will have at most $n-1$ child nodes associated with remaining choices. The number of choices depends on the structure of the join graph. If at least one of the remaining tables is connected to the first table via join predicates, only such tables will be considered. If none of the remaining tables is connected, all remaining tables become eligible (since a Cartesian product join cannot be avoided given the initial choice). In total, the search tree will have $n$ levels. Each leaf node is associated with a completely specified join order. 

We generally divide the execution of a query into small time slices in which different join order are tried. For each time slice, the UCT algorithm selects a path through the aforementioned tree, thereby selecting the join order to try next. As discussed previously, only part of the tree will be ``materialized'' (i.e., we keep nodes with node-specific counters in main memory). When selecting a path (i.e., a join order), UCT exploits counters in materialized nodes wherever available to select the next path step. Otherwise, the next step is selected randomly. After a join order has been selected, this join order is executed during the current time slice. Results from different time slices are merged (while removing overlapping results). We stop once a complete query result is obtained.

Our goal is to translate the aforementioned formal guarantees of UCT, bounding the distance between expected and optimal reward (i.e., the regret), into guarantees on query evaluation speed. To achieve that goal, we must link the reward function to query evaluation progress. The approaches for combined join order learning and execution, presented in the following subsections, define the reward function in different ways. They all have however the property that higher rewards correlate with better join orders. After executing the selected join order for a bounded amount of time, we measure evaluation progress and calculate a corresponding reward value. The UCT algorithm updates counters (average reward and number of visits) in all materialized tree nodes on the previously selected path. 

The following algorithms use the UCT algorithm as a sub-function.  More precisely, we use two UCT-related commands in the following pseudo-code: \Call{UctChoice}{$T$} and \Call{RewardUpdate}{$T,j,r$}. The first one returns the join order chosen by the UCT algorithm when applied to search tree $T$ (some of the following processing strategies maintain multiple UCT search trees for the same query). The second function updates tree $T$ by registering reward $r$ for join order $j$. Sometimes, we will pass a reward function instead of a constant for $r$ (with the semantics that the reward resulting from an evaluation of that function is registered).

\subsection{Generic Execution Engines}
\label{genericSub}

In this subsection, we show how we can learn optimal join orders when treating the execution engine as a black box with an SQL interface. This approach can be used on top of existing DBMS without changing a single line of their code. 

A naive approach to learn optimal join orders in this context would be the following. Following the discussion in the last subsection, we divide each table joined by the input query into an equal number of batches (if the input query contains unary predicates in the where clause, we can apply them in the same step). We simplify by assuming that all tables are sufficiently large to contain at least one tuple per batch (otherwise, less batches can be used for extremely small tables). We iteratively choose join orders using the UCT algorithm. In each iteration, we use the given join order to process a join between one batch \textit{for the left most table in the join order} and the remaining, complete tables. We remove each processed batch and add the result of each iteration to a result relation. We terminate processing once all batches are processed for at least one table. As we prove in more detail in Section~\ref{analysisSec}, the result relation contains a complete query result at this point. To process the query as quickly as possible, we feed the UCT algorithm with a reward function that is based on processing time for the current iteration. The lower execution time, the higher the corresponding reward. Note that reducing the size of the left-most table in a join order (by using only a single batch) tends to reduce the sizes of all intermediate results. If the dominant execution time component is proportional to those intermediate result sizes (e.g., time for generating intermediate result tuples, index lookups, number of evaluated join predicates), execution time for one batch is proportional to execution time for the entire table (with a scaling factor that corresponds to the number of batches per table). 

The reason why we call the latter algorithm naive is the following. In many settings, the reward function for the UCT algorithm is relatively inexpensive to evaluate. In our case, it requires executing a join between one batch and all the remaining tables. The problem is that execution cost can vary strongly as a function of join order. The factor separating execution time of best and worst join order may grow exponentially in the number of query tables. Hence, even a single iteration with a bad join order and a single tuple in the left-most table may lead to an overall execution time that is far from the optimum for the entire query. Hence, we must upper-bound execution time in each iteration.


This leads however to a new problem: what timeout should we choose per batch in each iteration? Ideally, we would select as timeout the time required by an optimal join order. Of course, we neither know an optimal join order nor its optimal processing time for a new query. Using a timeout that is lower than the optimum prevents us from processing an entire batch before the timeout. This might be less critical if we can backup the state of the processing engine and restore it when trying the same join order again. However, we currently treat the processing engine as a black box and cannot assume access to partial results and internal state. Further, most SQL processing engines execute a series of binary joins and generate potentially large intermediate results. As we may try out many different join orders, already the space required for storing intermediate results for each join order would become prohibitive. So, we must assume that all intermediate results are lost if execution times out before a batch is finished. Using lower timeouts than necessary prevents us from making any progress. On the other side, choosing a timeout that is too high leads to unnecessary overheads when processing sub-optimal join orders. 

\begin{figure}[t]
\centering
\includegraphics{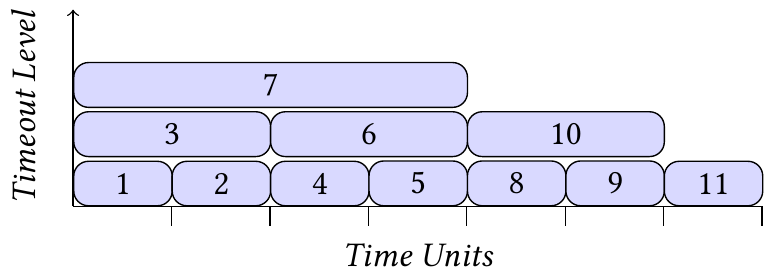}
\caption{Illustration of time budget allocation scheme: we do not know the optimal time per batch and iterate over different timeouts, allocating higher budgets less frequently.\label{budgetFigure}}
\end{figure}

The choice of a good timeout is therefore crucial while we cannot know the best timeout a-priori. The solution lies in an iterative scheme that tries different timeouts in different iterations. We carefully balance allocated execution time over different timeouts, avoiding to use higher timeouts unless lower ones have been tried sufficiently often. More precisely, we will present a timeout scheme that ensures that the total execution time allocated per timeout does not differ by more than factor two across different timeouts. Figure~\ref{budgetFigure} gives an intuition for the corresponding timeout scheme (numbers indicate the iteration in which the corresponding timeout is chosen). We use timeouts that are powers of two (we also call the exponent the \textit{Level} of the timeout). We always choose the highest timeout for the next iteration such that the accumulated execution time for that timeout does not exceed time allocated to any lower timeout. Having fixed a timeout for each iteration, we assign a reward of one for a fixed join order if the input was processed entirely. We assign a reward of zero otherwise. 

Algorithm~\ref{nonIntrusiveAlg} present pseudo-code matching the verbal description. First, tuples are filtered using unary predicates and the remaining tuples are partitioned into $b$ batches per table (we omit pseudo-code for pre-processing). We use function~\Call{DBMS}{} to invoke the underlying DBMS for processing one batch with a timeout. The function accumulates partial result in a result relation if processing finishes before the timeout and returns \textbf{true} in that case. Vector $o_i$ stores for each table an offset, indicating how many of its batches were completely processed (it is implicitly initialized to one for each table). Variable $n_l$ stores for each timeout level $l$ how much execution time was dedicated to it so far (it is implicitly initialized to zero and updated in each invocation of function~\Call{NextTimeout}{}). Note that we maintain separate UCT trees $T_t$ for each timeout $t$ (implicitly initialized as a single root node representing no joined tables). This prevents for instance processing failures for lower timeouts to influence join ordering decisions for larger timeouts. We prove the postulated properties of the timeout scheme (i.e., balancing time over different timeouts) in Section~\ref{analysisSec}. 

\begin{algorithm}[t!]
\renewcommand{\algorithmiccomment}[1]{// #1}
\begin{small}
\begin{algorithmic}[1]
\State \Comment{Returns timeout for processing next batch,}
\State \Comment{based on times $n$ given to each timeout before.}
\Function{NextTimeout}{$n$}
\State \Comment{Choose timeout level}
\State $L\gets\max\{L|\forall l<L:n_l\geq n_L+2^L\}$
\State \Comment{Update total time given to level}
\State $n_L\gets n_L+2^L$
\State \Comment{Return timeout for chosen level}
\State \Return{$2^L$}
\EndFunction
\vspace{0.15cm}
\State \Comment{Process SPJ query $q$ using existing DBMS and}
\State \Comment{by dividing each table into $b$ batches.}
\Procedure{SkinnerG}{$q=R_1\Join\ldots\Join R_m,b$}
\State \Comment{Apply unary predicates and partitioning}
\State $\{R_1^1,\ldots,R_m^b\}\gets$\Call{PreprocessingG}{$q,b$}
\State \Comment{Until we processed all batches of one table}
\While{$\nexists i:o_i>b$}
\State \Comment{Select timeout using pyramid scheme}
\State $t\gets$\Call{NextTimeout}{n}
\State \Comment{Select join order via UCT algorithm}
\State $j\gets$\Call{UctChoice}{$T_t$}
\State \Comment{Process one batch until timeout}
\State $suc\gets$\Call{DBMS}{$R_{j1}^{o_{j1}}\Join R_{j2}^{o_{j2}..b}\ldots\Join R_{jm}^{o_{jm}..b},t$}
\State \Comment{Was entire batch processed successfully?}
\If{$suc$}
\State \Comment{Mark current batch as processed}
\State $o_{j1}\gets o_{j1}+1$
\State \Comment{Store maximal reward in search tree}
\State \Call{RewardUpdate}{$T_t,j,1$}
\Else
\State \Comment{Store minimal reward in search tree}
\State \Call{RewardUpdate}{$T_t,j,0$}
\EndIf
\EndWhile
\EndProcedure
\end{algorithmic}
\end{small}
\caption{Regret-bounded query evaluation using a generic execution engine.\label{nonIntrusiveAlg}}
\end{algorithm}

\subsection{Hybrid Algorithm}
\label{hybridSub}


The algorithm presented in the last subsection uses reinforcement learning alone to order joins. It bypasses any join ordering capabilities offered by an existing optimizer completely. This approach is efficient for queries where erroneous statistics or difficult-to-analyze predicates mislead the traditional optimizer. However, it adds unnecessary learning overheads for standard queries where a traditional optimizer would produce reasonable query plans.

\begin{figure}[t]
\includegraphics{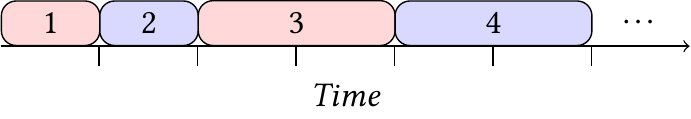}
\caption{The hybrid approach alternates with increasing timeouts between executing plans proposed by the traditional optimizer (red) and learned plans (blue).\label{hybridFig}}
\end{figure}

We present a hybrid algorithm that combines reinforcement learning with a traditional query optimizer. Instead of using an existing DBMS only as an execution engine, we additionally try benefiting from its query optimizer whenever possible. We do not provide pseudo-code for the hybrid algorithm as it is quick to explain. We iteratively execute the query using the plan chosen by the traditional query optimizer, using a timeout of $2^i$ where $i$ is the number of invocations (for the same input query) and time is measured according to some atomic units (e.g., several tens of milliseconds). In between two traditional optimizer invocations, we execute the learning based algorithm described in the last subsection. We execute it for the same amount of time as the traditional optimizer. We save the state of the UCT search trees between different invocations of the learning approach. Optionally, if a table batch was processed by the latter, we can remove the corresponding tuples before invoking the traditional optimizer. Figure~\ref{hybridFig} illustrates the hybrid approach. As shown in Section~\ref{analysisSec}, the hybrid approach bounds expected regret (compared to the optimal plan) and guarantees a constant factor overhead compared to the original optimizer.



\subsection{Customized Execution Engines}
\label{customizedSub}

The algorithms presented in the previous sections can work with any execution engine for SPJ queries. In this section, we present an execution engine that is tailored towards the needs of a learning based join ordering strategy.  In addition, we present a variant of the join order learning algorithm that optimally exploits that execution engine.

\begin{algorithm}[t]
\renewcommand{\algorithmiccomment}[1]{// #1}
\begin{small}
\begin{algorithmic}[1]
\State \Comment{Advance tuple index in state $s$ for table at position $i$}
\State \Comment{in join order $j$ for query $q$, considering tuple offsets $o$.}
\Function{NextTuple}{$q=R_1\Join\ldots\Join R_m,j,o,s,i$}
\State \Comment{Advance tuple index for join order position}
\State $s_{j_i}\gets s_{j_i}+1$
\State \Comment{While index exceeds relation cardinality}
\While{$s_{j_i}>|R_{j_i}|$ \textbf{and} $i>0$}
\State $s_{j_i}\gets o_{j_i}$
\State $i\gets i-1$
\State $s_{j_i}\gets s_{j_i}+1$
\EndWhile
\State \Return{$\langle s,i\rangle$}
\EndFunction
\vspace{0.15cm}
\State \Comment{Execute join order $j$ for query $q$ starting from}
\State \Comment{tuple indices $s$ with tuple offsets $o$. Add results}
\State \Comment{to $R$ until time budget $b$ is depleted.}
\Function{ContinueJoin}{$q=R_1\Join\ldots\Join R_m,j,o,b,s,R$}
\State $i\gets1$ \Comment{Initialize join order index}
\While{processing time $<b$ \textbf{and} $i>0$}
\State $t\gets\Call{Materialize}{R_{j_1}[s_{j_1}]\times\ldots\times R_{j_i}[s_{j_i}]}$
\If{$t$ satisfies all newly applicable predicates}
\If{$i=m$} \Comment{Is result tuple completed?}
\State $R\gets R\cup\{s\}$ \Comment{Add indices to result set}
\State $\langle s,i\rangle\gets\Call{NextTuple}{q,j,o,s,i}$ 
\Else \Comment{Tuple is incomplete}
\State $i\gets i+1$
\EndIf
\Else \Comment{Tuple violates predicates}
\State $\langle s,i\rangle\gets\Call{NextTuple}{q,j,o,s,i}$ 
\EndIf
\EndWhile
\State \Comment{Join order position 0 indicates termination}
\State \Return{$(i<1)$}
\EndFunction
\end{algorithmic}
\end{small}
\caption{Multi-way join algorithm supporting fast join order switching.\label{customizedAuxAlg}}
\end{algorithm}

\begin{algorithm}[t]
\renewcommand{\algorithmiccomment}[1]{// #1}
\begin{small}
\begin{algorithmic}[1]
\State \Comment{Regret-bounded evaluation of SPJ query $q$,}
\State \Comment{length of time slices is restricted by $b$.}
\Function{SkinnerC}{$q=R_1\Join\ldots\Join R_m,b$}
\State \Comment{Apply unary predicates and hashing}
\State $q\gets$\Call{PreprocessingC}{$q$}
\State $R\gets\emptyset$ \Comment{Initialize result indices}
\State $finished\gets\mathbf{false}$ \Comment{Initialize termination flag}
\While{$\neg finished$}
\State \Comment{Choose join order via UCT algorithm}
\State $j\gets\Call{UctChoice}{T}$
\State \Comment{Restore execution state for this join order}
\State $s\gets\Call{RestoreState}{j,o,S}; s_{prior}\gets s$
\State \Comment{Execute join order during time budget}
\State $finished\gets\Call{ContinueJoin}{q,j,o,b,s,R}$
\State \Comment{Update UCT tree via progress-based rewards}
\State \Call{RewardUpdate}{$T,j,\textproc{Reward}(s-s_{prior},j)$}
\State \Comment{Backup execution state for join order}
\State $\langle o,S\rangle\gets\Call{BackupState}{j,s,o,S}$
\EndWhile
\State \Return{$[\Call{Materialize}{R_1[s_1]\times R_2[s_2]\ldots}|s\in R]$}
\EndFunction
\end{algorithmic}
\end{small}
\caption{Regret-bounded query evaluation using a customized execution engine.\label{customizedAlg}}
\end{algorithm}

Most execution engines are designed for a traditional approach to query evaluation. They assume that a single join order is executed for a given query (after being generated by the optimizer). Learning optimal join orders while executing a query leads to unique requirements on the execution engine. First, we execute many different join orders for the same query, each one only for a short amount of time. Second, we may even execute the same join order multiple times with many interruptions (during which we try different join orders). This specific scenario leads to (at least) three desirable performance properties for the execution engine. First, the execution engine should minimize overheads when switching join orders. Second, the engine should preserve progress achieved for a given join order even if execution is interrupted. Finally, the engine should allow to share achieved progress, to the maximal extent possible, between different join orders as well. The generic approach realizes the latter point only to a limited extend (by discarding batches processed completely by any join order from consideration by other join orders). 

The key towards achieving the first two desiderata (i.e., minimal overhead when switching join orders or interrupting execution) is a mechanism that backs up execution state as completely as possible. Also, restoring prior state when switching join order must be very efficient. By ``state'', we mean the sum of all intermediate results and changes to auxiliary data structures that were achieved during a partial query evaluation for one specific join order. We must keep execution state as small as possible in order to back it up and to restore it efficiently. 


Two key ideas enable us to keep execution state small. First, we represent tuples in intermediate results concisely as vectors of tuple indices (each index pointing to one tuple in a base table). Second, we use a multi-way join strategy limiting the number of intermediate result tuples to at most one at any point in time. Next, we discuss both ideas in detail. 


Traditional execution engines for SPJ queries produce intermediate results that consist of actual tuples (potentially containing many columns with elevated byte sizes). To reduce the size of the execution state, we materialize tuples only on demand. Each tuple, be it a result tuple or a tuple in an intermediate result, is the result of a join between single tuples in a subset of base tables. Hence, whenever possible, we describe tuples simply by an array of tuple indices (whose length is bounded by the number of tables in the input query). We materialize partial tuples (i.e., only the required columns) temporarily to check whether they satisfy applicable predicates or immediately before returning results to the user. To do that efficiently, we assume a column store architecture (allowing quick access to selected columns) and a main-memory resident data set (reducing the penalty of random data access). 

Most traditional execution engines for SPJ queries process join orders by a sequence of binary join operations. This can generate large intermediate results that would become part of the execution state. We avoid that by a multi-way join strategy whose intermediate result size is restricted to at most one tuple. We describe this strategy first for queries with generic predicates. Later, we discuss an extension for queries with equality join predicates based on hashing.

Intuitively, our multi-way join strategy can be understood as a depth-first search for result tuples. Considering input tables in one specific join order, we fix one tuple in a predecessor table before considering tuples in the successor table. We start with the first tuple in the first table (in join order). Next, we select the first tuple in the second table and verify whether all applicable predicates are satisfied. If that is the case, we proceed to considering tuples in the third table. If not, we consider the next tuple in the second table. Once all tuples in the second table have been considered for a fixed tuple in the first table, we ``backtrack'' and advance the tuple indices for the first table by one. Execution ends once all tuples in the first table have been considered.

\tikzstyle{row}=[anchor=south west, draw=black, fill=blue!50, minimum width=1cm, minimum height=0.5cm]
\tikzstyle{joinFlow}=[thick, -stealth, red, dashed]
\tikzstyle{joinOrder}=[red, font=\small]

\begin{figure}[t]
\includegraphics{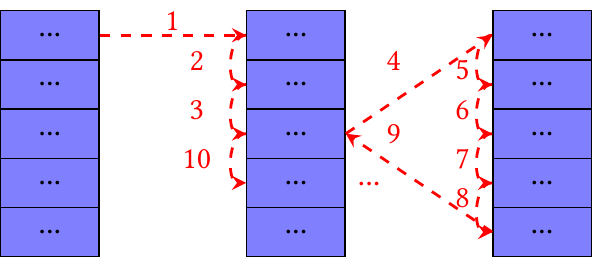}
\caption{Depth-first multi-way join strategy: we increase the join order index once the first tuple satisfying all applicable predicates is found, we decrease it once all tuples in the current table were considered.\label{joinFigure}}
\end{figure}

\begin{example}
Figure~\ref{joinFigure} illustrates the process for a three-table join. Having fixed a tuple in the left-most table (at the left, we start with the first tuple), the join order index is increased. Next, we find the first tuple in the second table satisfying the join condition with the current tuple in the first table. Having found such a tuple, we increase the join order index again. Now, we iterate over tuples in the third table, adding each tuple combination satisfying all applicable conditions to the result. After all tuples in the last table have been considered, we decrease the join order index and consider the next tuple in the second table.
\end{example}

Algorithm~\ref{customizedAuxAlg} implements that approach. Function~\Call{ContinueJoin}{} realizes the execution strategy described before. For a fixed amount of processing time (we use the number of outer while loop iterations as a proxy in our implementation) or until all input data is processed, it either increases ``depth'' (i.e., join order index $i$) to complete a partial tuple, satisfying all applicable predicates, further, or it advances tuples indices using Function~\Call{NextTuple}{}. The latter function increases the tuple indices for the current join order index or backtracks if the table cardinality is exceeded. Note that the same result tuple might be added multiple times in invocations of the execution engine for different join orders. However, we add tuple index vectors into a result \textit{set}, avoiding duplicate entries (of course, two different tuple index vectors can represent two result tuples with the same values in each column). 


We discuss the main function (\Call{SkinnerC}{}) learning optimal join orders using a customized execution engine (see Algorithm~\ref{customizedAlg}). The most apparent difference to the version from Section~\ref{genericSub} is the lack of a dynamic timeout scheme. Instead, we use the same timeout for each invocation of the execution engine. This becomes possible since progress made when executing a specific join order is never lost. By minimizing the size of the execution state, we have enabled an efficient backup and restore mechanism (encapsulated by functions \Call{BackupState}{} and \Call{RestoreState}{} whose pseudo-code we omit) that operates only on a small vector of indices. The number of stored vectors is furthermore proportional to the size of the UCT tree. The fact that we do not lose partial results due to inappropriate timeouts anymore has huge impact from the theoretical perspective (see Section~\ref{analysisSec}) as well as for performance in practice (see Section~\ref{experimentsSec}). Learning overheads are lower than before since we only maintain a single UCT search tree accumulating knowledge from all executions. 



In Section~\ref{genericSub}, we used a binary reward function based on whether the current batch was processed. We do not process data batch-wise anymore and must therefore change the reward function (represented as function~\textproc{Reward} in the pseudo-code which depends on execution state delta and join order). For instance, we can we use as reward the percentage of tuples processed in the left-most table during the last invocation. This function correlates with execution speed and returns values in the range between 0 and 1 (the standard formulas used for selecting actions by the UCT algorithm are optimized for that case~\cite{Kocsis2006}). SkinnerDB uses a slight refinement: we sum over all tuple index deltas, scaling each one down by the product of cardinality values of its associated table and the preceding tables in the current join order. Note that the UCT algorithm averages rewards over multiple invocations of the same join order and keeps exploring (i.e., obtaining a reward of zero for one good join order during a single invocation of the execution engine will not exclude that order from further consideration). 

We have not yet discussed how our approach satisfies the third desiderata (sharing as much progress as possible among different join orders) mentioned at the beginning. We use in fact several techniques to share progress between different join orders (those techniques are encapsulated in Function~\Call{RestoreState}{}). First, we use again offset counters to exclude for each table tuples that have been joined with all other tuples already (vector $o$ in the pseudo-code which is implicitly initialized to one). In contrast to the version from Section~\ref{genericSub}, offsets are not defined at the granularity of data batches but at the granularity of single tuples. This allows for a more fine-grained sharing of progress between different join orders than before. 

Second, we share progress between all join orders with the same prefix. Whenever we restore state for a given join order, we compare execution progress between the current join order and all other orders with the same prefix (iterating over all possible prefix lengths). Comparing execution states $s$ and $s'$ for two join orders $j$ and $j'$ with the same prefix of length $k$ (i.e., the first $k$ tables are identical), the first order is ``ahead'' of the second if there is a join order position $p\leq k$ such that $s_{j_i}\geq s'_{j_i}$ for $i<p$ and $s_{j_p}>s'_{j_p}+1$. In that case, we can ``fast-forward'' execution of the second join order, skipping result tuples that were already generated via the first join order. We do so by executing $j'$ from a merged state $s''$ where $s''_{j'_i}=s_{j'_i}$ for $i<p$, $s''_{j'_p}=s_{j'_p}-1$, and $s''_{j'_i}=o_{j'_i}$ for $i>p$ (since we can only share progress for the common prefix). Progress for different join orders is stored in the data structure represented as $S$ in Algorithm~\ref{customizedAlg}, Function~\textproc{RestoreState} takes care of fast-forwarding (selecting the most advanced execution state among all alternatives).


So far, we described the algorithm for queries with generic predicates. Our actual implementation uses an extended version supporting equality join predicates via hashing. If equality join predicates are present, we create hash tables on all columns subject to equality predicates during pre-processing. Of course, creating hash tables to support all possible join orders creates overheads. However, those overheads are typically small as only tuples satisfying all unary predicates are hashed. We extend Algorithm~\ref{customizedAuxAlg} to benefit from hash tables: instead of incrementing tuple indices always by one (line~5), we ``jump'' directly to the next highest tuple index that satisfies at least all applicable equality predicates with preceding tables in the current join order (this index can be determined efficiently via probing).





\section{Formal Analysis}
\label{analysisSec}

We prove correctness (see Section~\ref{correctnessSub}), and the regret bounds (see Section~\ref{regretSub}) for all Skinner variants.

\subsection{Correctness}
\label{correctnessSub}

Next, we prove correctness (i.e., that each algorithm produces a correct query result). We distinguish result tuples (tuples from the result relation joining all query tables) from component tuples (tuples taken from a single table). 

\begin{theorem}
Skinner-G produces the correct query result.
\end{theorem}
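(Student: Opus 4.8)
The plan is to prove the two set inclusions between the accumulated result relation $R$ and the true answer $F_1 \Join \cdots \Join F_m$, where $F_i = \sigma_{\text{Unary}}(R_i)$ is the filtered table that preprocessing partitions into batches $R_i^1,\ldots,R_i^b$. Soundness is the easy direction: an iteration appends tuples only on a successful (non-timed-out) call to the execution engine, and the slice it processes, $R_{j_1}^{o_{j_1}} \Join R_{j_2}^{o_{j_2}..b} \Join \cdots \Join R_{j_m}^{o_{j_m}..b}$, joins subsets of the $F_i$; hence every emitted tuple lies in $F_1 \Join \cdots \Join F_m$. Since join distributes over the union of batches and the output \emph{set} of a join is independent of join order, this slice equals the union of the per-batch joins $R_1^{k_1} \Join \cdots \Join R_m^{k_m}$ over all batch-index vectors $(k_1,\ldots,k_m)$ with $k_{j_1}=o_{j_1}$ and $k_{j_i}\ge o_{j_i}$ for $i\ge 2$. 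So each iteration ``covers'' a precise family of batch-index combinations, and because $R$ is a set, covering a combination once suffices.

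The substance is completeness. Following the statement's terminology, I fix a result tuple $\tau$ with component tuples $c_1\in F_1,\ldots,c_m\in F_m$; each $c_i$ lies in a unique batch, say batch $\kappa_i$ of $R_i$, so it suffices to show the combination $(\kappa_1,\ldots,\kappa_m)$ is covered by some successful iteration. I will use that the loop halts with some table $t^*$ having $o_{t^*}>b$, i.e.\ all $b$ of its batches were successfully processed while it was left-most (the timeout scheme drives timeouts upward without bound, so every batch eventually completes). In particular batch $\kappa_{t^*}$ of $t^*$ is processed left-most at some iteration.

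The subtlety is that a left-most table joins only against the \emph{remaining} (offset-excluded) batches of the others, and those offsets grow over time, so the obvious iteration need not cover $(\kappa_1,\ldots,\kappa_m)$. I would resolve this with a descent argument. Set $t_0=t^*$ and let $I_0$ be the iteration processing batch $\kappa_{t_0}$ of $t_0$ left-most. If every other table $j$ has offset $o_j\le\kappa_j$ at $I_0$, then $I_0$ covers $(\kappa_1,\ldots,\kappa_m)$ and we are done. Otherwise some $j\neq t_0$ has $o_j>\kappa_j$ at $I_0$; since offsets advance only on a successful left-most processing of a batch, batch $\kappa_j$ of $j$ was processed left-most at a strictly earlier iteration $I_1<I_0$. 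Put $t_1=j$ and repeat. Monotonicity of the offsets forces the witness iteration index to strictly decrease at each step, so the process halts after finitely many steps, and it can only halt at a step whose iteration covers $(\kappa_1,\ldots,\kappa_m)$: at the halting iteration $I_s$ the left-most table $t_s$ sits at offset $\kappa_{t_s}$ while every other table's offset is $\le\kappa_j$, so the processed slice contains $R_1^{\kappa_1}\Join\cdots\Join R_m^{\kappa_m}$ and hence $\tau$. Combining both inclusions gives $R=F_1\Join\cdots\Join F_m$.

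The step I expect to be the main obstacle is exactly this completeness argument. Naively one hopes the terminating table's own iteration covers every combination, but offset advancement breaks that. The nonobvious point is that whenever an offset has ``run ahead'' of $\kappa_j$, the corresponding batch was itself processed left-most \emph{earlier}, at a moment when all other offsets (including $t^*$'s) were no larger; the strictly-decreasing-iteration descent packages this monotonicity cleanly and guarantees a covering iteration exists. I also note that the proof tacitly relies on termination (the existence of $t^*$), which I would either cite or justify briefly from the growing-timeout scheme.
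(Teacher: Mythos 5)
Your completeness argument is correct, and it takes a genuinely different route from the paper's. The paper argues forward, via an invariant maintained over iterations: at every point, every result tuple containing at least one \emph{excluded} component tuple (one lying in a batch below its table's offset) has already been generated. The invariant holds trivially at the start, is preserved on each successful batch (the newly excluded component tuples were just joined against all non-excluded tuples of the other tables, while combinations involving previously excluded tuples are covered by the invariant), and at termination one table is fully excluded, so every result tuple has been generated. Your per-tuple backward descent --- tracing a witness iteration back through tables whose offsets have run ahead of $\kappa_j$, with strictly decreasing iteration indices --- reaches the same conclusion from exactly the same two facts (offsets advance only on successful left-most processing; a successful iteration joins the left-most batch against everything not yet excluded). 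The invariant formulation buys brevity and, as noted below, yields the multiplicity claim almost for free; your descent is more constructive, exhibiting a concrete covering iteration for each result tuple. Both arguments presuppose termination, which you correctly flag.

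The genuine soft spot is your dismissal of duplicates with ``because $R$ is a set, covering a combination once suffices.'' In Skinner-G the result relation is accumulated by the underlying DBMS and is a bag; it is Skinner-C that maintains a \emph{set} of tuple-index vectors. Under SQL bag semantics one cannot deduplicate by value (legitimate duplicates with equal attribute values must be preserved), so the theorem really requires that each result tuple, identified by its combination of component tuples, is generated \emph{exactly} once --- this is why the paper explicitly argues that no result tuple is generated more often than under traditional execution. The repair is short and uses the same offset monotonicity your descent already exploits: the slices of any two distinct successful iterations are disjoint as families of batch-index combinations. If both iterations have the same left-most table, their left-most batches differ because the offset advanced after the first success; if the left-most tables differ, the later iteration's offset for the earlier iteration's left-most table has already advanced past the batch processed there, so no combination is shared. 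Adding this observation turns your set equality into the required statement about multiplicities and closes the gap.
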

\begin{proof}
Offsets exclude component tuples from consideration when executing the following joins. We show the following invariant: all result tuples containing excluded component tuples have been generated. This is certainly true at the start where offsets do not exclude any tuples. Offsets are only advanced if batches have been successfully processed. In that case, all newly excluded component tuples have been joined with tuples from all other tables that are not excluded. But excluded tuples can be neglected according to our invariant. The algorithm terminates only after all tuples from one table have been excluded. In that case, all result tuples have been generated. Still, we need to show that no result tuple has been generated more often than with a traditional execution. This is the case since we exclude all component tuples in one table after each successfully processed batch.
\end{proof}

\begin{theorem}
Skinner-H produces the correct query result.
\end{theorem}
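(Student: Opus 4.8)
The plan is to reduce correctness of Skinner-H to the two sound procedures it interleaves: Skinner-G, whose correctness was just established, and an ordinary execution of the plan chosen by the traditional optimizer, which I take as a standing assumption on the underlying DBMS. Skinner-H introduces no third tuple-producing mechanism; it only alternates invocations of these two procedures, each operating on a shared pool of surviving component tuples. Soundness is therefore immediate: each sub-procedure emits only genuine result tuples over its current input, and that input is always a sub-relation of the original tables, so every tuple Skinner-H outputs belongs to the true result. The first real step is to transport the invariant used in the Skinner-G proof to the shared state, namely that at every moment \emph{all result tuples containing a component tuple already excluded by an advanced offset have been generated}.

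With that invariant in hand, I would establish completeness by splitting on the two ways Skinner-H can terminate. Component tuples are excluded only by Skinner-G, and only after a batch has been processed within its timeout, so by the invariant every result tuple touching an excluded component tuple has already been emitted before that tuple is dropped from the relations handed to the optimizer. Consequently, no result tuple can be lost by the removal step: any not-yet-generated result tuple must be composed entirely of surviving component tuples, which are exactly what the reduced sub-relations still contain. If the embedded Skinner-G run finishes first, the preceding theorem already guarantees a complete result. If instead an optimizer invocation runs its plan to completion first, that run produces every result tuple over the currently surviving sub-relations, and unioning these with the tuples already generated over the excluded component tuples (present by the invariant) yields the full result. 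In either case the union is precisely the set of result tuples.

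For the no-over-generation half I would argue that the optional tuple-removal is exactly what keeps the two output streams disjoint: a tuple produced by Skinner-G necessarily contains an excluded---hence removed---component tuple, so the optimizer, running only on survivors, cannot reproduce it, while duplicates among distinct learned join orders are already suppressed by inserting index vectors into a result set, as in the Skinner-G argument. The hard part will be precisely this bookkeeping at the seam between the two procedures: one must check that removal keeps the streams disjoint while the shared offsets keep them jointly exhaustive, and that the partial, interrupted progress of one procedure never invalidates the correctness guarantee of the other. I expect that once the Skinner-G invariant is carried over to the shared offset vector, both disjointness and exhaustiveness follow without additional case analysis, so the proof collapses to verifying that the removal step respects the invariant and that termination triggers one of the two completeness cases above.
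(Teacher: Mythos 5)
Your proof is correct, and at its core it uses the same reduction as the paper: correctness of Skinner-H follows from correctness of Skinner-G (the preceding theorem) plus the standing assumption that executing the traditional optimizer's plan yields a correct result. The paper's own proof, however, stops there --- it is essentially three sentences, observing that Skinner-H ``returns the result generated by one of the latter two algorithms,'' which implicitly analyzes the variant in which the two procedures run on independent data and whichever one finishes produces the complete result on its own. You go considerably further by tackling the shared-state variant that arises when the \emph{optional} removal step is enabled (dropping tuples of batches already processed by Skinner-G before handing the tables to the traditional optimizer): in that case a completed optimizer run is no longer the full result by itself, and you correctly transport the Skinner-G invariant (all result tuples containing excluded component tuples have already been generated) to show that the union of the optimizer's output on the surviving sub-relations with Skinner-G's accumulated output is jointly exhaustive, while the removal itself guarantees the two streams are disjoint, so no tuple is produced twice across the seam. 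What each approach buys: the paper's argument is minimal and suffices for the plain interleaving without removal; yours actually covers the algorithm as described in Section~3.4 including the removal optimization, a case the paper's proof silently skips, at the cost of the extra bookkeeping about offsets, disjointness, and the two termination cases.
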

\begin{proof}
We assume that executing a query plan produced by the traditional optimizer generates a correct result. The result produced by Skinner-G is correct according to the preceding theorem. This implies that Skinner-H produces a correct result as it returns the result generated by one of the latter two algorithms.
\end{proof}

\begin{theorem}
Skinner-C produces the correct query result.
\end{theorem}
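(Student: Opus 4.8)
The plan is to mirror the structure of the Skinner-G correctness proof, while handling the three features that distinguish Skinner-C: the multi-way (depth-first) join of Algorithm~\ref{customizedAuxAlg}, the backup/restore of execution state across time slices, and the fine-grained progress sharing via offsets and prefix-based fast-forwarding. Since the final result is materialized from a \emph{set} $R$ of index vectors, producing one output tuple per distinct vector, it suffices to show that at termination $R$ equals exactly the set of index vectors $s$ whose component tuples $R_{1}[s_1],\ldots,R_{m}[s_m]$ jointly satisfy all query predicates. Each valid combination of base tuples corresponds to exactly one such vector, so establishing this set equality immediately yields both the correct result tuples and their correct multiplicities under standard (bag) SQL semantics. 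Soundness (every $s\in R$ is valid) is the easy direction: \Call{ContinueJoin}{} inserts $s$ only when the join order index reaches $i=m$, and at each level the newly applicable predicates are verified before $i$ is incremented, so by induction on the level every inserted vector satisfies all predicates.

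First I would argue correctness for a single join order executed in isolation. Viewing \Call{ContinueJoin}{} together with \Call{NextTuple}{} as a depth-first traversal of the product $R_{j_1}\times\cdots\times R_{j_m}$ (descending by increasing $i$ when the current partial tuple passes its predicates, backtracking and advancing indices otherwise), a run begun from the initial state and continued until it terminates with $i<1$ visits every index combination consistent with the incremental predicate checks and inserts exactly the valid complete vectors, each once. I would then invoke the backup/restore mechanism (\Call{BackupState}{} and \Call{RestoreState}{}) to establish \emph{interruption transparency}: because the full execution state $s$ is saved when a slice ends and restored when the same order resumes, splitting one order's traversal across many time slices yields the same sequence of insertions as an uninterrupted run. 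Hence, ignoring offsets and sharing, each executed order behaves like a classical multi-way join.

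Next I would establish the offset invariant, the analogue of the key claim in the Skinner-G proof: at every point, every valid index vector that uses a component tuple excluded by the offset vector $o$ (a tuple of table $t$ with index below $o_t$) has already been inserted into $R$. This holds vacuously initially ($o\equiv 1$), and \Call{BackupState}{} advances the offset of a table only once the corresponding component tuple has been joined against all relevant combinations of the remaining tables, so that every result tuple containing it is already present; the sole change from Skinner-G is that exclusion now happens at single-tuple rather than batch granularity, which does not affect the argument. When the algorithm sets $\mathit{finished}$, some executed order has terminated with $i<1$, meaning its left-most table was exhausted from its offset onward; combined with the invariant, this forces every valid vector with left-most index at least the offset to have been generated during that traversal, and every vector below the offset to have been generated earlier, giving completeness.

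The hard part will be showing that the prefix-based \emph{fast-forwarding} inside \Call{RestoreState}{} never skips a valid vector that has not already been produced. Here I would fix two orders $j,j'$ sharing a prefix of length $k$ and the position $p\le k$ witnessing that $j$ is ``ahead'' of $j'$ ($s_{j_i}\ge s'_{j_i}$ for $i<p$ and $s_{j_p}>s'_{j_p}+1$), and show that resuming $j'$ from the merged state $s''$ (with $s''_{j'_i}=s_{j'_i}$ for $i<p$, $s''_{j'_p}=s_{j'_p}-1$, and $s''_{j'_i}=o_{j'_i}$ for $i>p$) discards only vectors whose shared prefix $j$ has already driven to completion. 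The subtlety is that $j$ and $j'$ enumerate the suffix tables in different orders, so the argument must be made at the granularity of the shared prefix: every complete vector that agrees with $j$ on the first $p$ positions and lies in the skipped range has a prefix whose entire suffix subtree $j$ already enumerated, hence every such valid vector is already in $R$; meanwhile resetting positions $i>p$ to the offsets $o_{j'_i}$ re-enables, via the offset invariant, precisely the suffixes not yet guaranteed complete, so nothing valid is lost. Combining interruption transparency, the offset invariant, and this fast-forwarding argument shows that $R$ is both sound and complete at termination, which establishes the theorem.
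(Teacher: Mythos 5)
Your proposal is correct, and at the top level it follows the same skeleton as the paper's proof: no spurious duplicates because the result is a \emph{set} of tuple-index vectors (distinct vectors can still yield value-level duplicates, so bag semantics is preserved), plus an argument that every valid combination of component tuples is inserted at least once. The difference is in how completeness is established. The paper disposes of it with three asserted facts, the last being that ``tuple indices are advanced in a way that covers all combinations of component tuples''; it never engages with the mechanisms that make this assertion non-obvious, namely suspension and resumption across time slices, tuple-granularity offsets, and prefix-based fast-forwarding inside \textproc{RestoreState}. You decompose exactly these into separate lemmas --- interruption transparency via \textproc{BackupState}/\textproc{RestoreState}, the offset invariant carried over from the Skinner-G proof, and fast-forward safety --- and you correctly identify fast-forward safety as the crux, giving the right argument: every skipped vector either has shared-prefix coordinates that the ``ahead'' order has already driven through its entire suffix subtree, or falls below the offsets and is covered by the offset invariant. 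What the paper's version buys is brevity; what yours buys is verification of precisely the machinery where a correctness bug would hide (a fast-forward that discarded suffix combinations not actually shared would silently lose result tuples, and the paper's argument would not detect it). One wording to tighten: ``every complete vector that agrees with $j$ on the first $p$ positions and lies in the skipped range'' is vacuous if read as agreement with $j$'s current state $s$, since vectors matching $s$ on all $p$ prefix coordinates are exactly the ones \emph{not} skipped (recall $s''_{j'_p}=s_{j_p}-1$); what you mean, and what the rest of your sentence relies on, is vectors whose first $p$ coordinates are lexicographically below $j$'s current prefix.
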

\begin{proof}
Skinner-C does not produce any duplicate result tuples as justified next. Result tuples are materialized only at the very end of the main function. The result set contains tuple index vectors until then. Vectors are unique over all result tuples (as they indicate the component tuples from which they have been formed) and, due to set semantics, no vector will be contained twice in the result. Also, Skinner-C produces each result tuple at least once. This is due to the fact that \textit{i)}~complete tuples are always inserted into the result set, \textit{ii)}~partial tuples (i.e., $i<m$) are completed unless they violate predicates (then they cannot be completed into result tuples), and \textit{iii)}~tuple indices are advanced in a way that covers all combinations of component tuples.
\end{proof}

\subsection{Regret Bounds}
\label{regretSub}

Regret is the difference between actual and optimal execution time. We denote execution time by $n$ and optimal time by $n^*$. Skinner-G and Skinner-H choose timeout levels (represented by the $y$ axis in Figure~\ref{budgetFigure}) that we denote by $l$. We use the subscript notation (e.g., $n_l$) to denote accumulated execution time spent with a specific timeout level. We study regret for fixed query properties (e.g., the number of joined tables, $m$, or the optimal reward per time slice, $r^*$) for growing amounts of input data (i.e., table size) and execution time. In particular, we assume that execution time, in relation to query size, is large enough to make the impact of transitory regret negligible~\cite{Coquelin2007b}. We focus on regret of the join phase as pre-processing overheads are linear in data and query size (while post-processing overheads are polynomial in query and join result size). We assume that time slices are chosen large enough to make overheads related to learning and join order switching negligible. Specifically for Skinner-G and Skinner-H, we assume that the optimal timeout per time slice applies to all batches. To simplify the analysis, we study slightly simplified versions of the algorithms from Section~\ref{algorithmSec}. In particular, we assume that offsets are only applied to exclude tuples for the left-most table in the current join order. This means that no progress is shared between join orders that do not share the left-most table. For Skinner-C, we assume that the simpler reward function (progress in left-most table only) is used. We base our analysis on the properties of the UCT variant proposed by Kocsis and Szepesvari~\cite{Kocsis2006}.

For a given join order, processing time in SkinnerDB is equivalent to processing time in traditional engines if scaling down the size of the left-most table scales down execution time proportionally (i.e., execution time behaves similarly to the $C_{out}$ cost metric~\cite{Krishnamurthy1986}). If so, the regret bounds apply compared to an optimal traditional query plan execution.





Before analyzing Skinner-G, we first prove several properties of the pyramid timeout scheme introduced in Section~\ref{genericSub}. 

\begin{lemma}
The number of timeout levels used by Skinner-G is upper-bounded by $\log(n)$.\label{nrLevelsLemma}
\end{lemma}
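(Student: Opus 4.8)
The plan is to analyze the timeout scheme directly from its definition in Function \textproc{NextTimeout}. The key structural fact is that timeouts are powers of two, indexed by their level $l$ (so the timeout is $2^l$), and the scheme maintains accumulated times $n_l$ per level. A level $l$ is only ever chosen for an iteration once all strictly lower levels have received enough accumulated time, specifically the rule $L \gets \max\{L \mid \forall l < L : n_l \geq n_L + 2^L\}$ guarantees we never raise the accumulated time at level $L$ beyond what any lower level has already received. First I would establish the invariant that the accumulated times are (weakly) monotonically decreasing in the level: $n_0 \geq n_1 \geq \cdots \geq n_L$ at all times. This follows because the selection rule only allows incrementing $n_L$ (by $2^L$) when every lower level $l < L$ already satisfies $n_l \geq n_L + 2^L$, i.e.\ each lower level dominates the post-increment value of the chosen level.

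Next I would use this monotonicity together with a lower bound on the accumulated time at any used level. The crucial observation is that before level $L$ can be used \emph{at all} (i.e.\ before $n_L$ is first incremented from $0$ to $2^L$), every lower level must have accumulated at least $2^L$; and more to the point, whenever the highest used level is $L$, all the lower levels carry comparable (in fact larger) accumulated time. Summing, the total execution time $n = \sum_l n_l$ is at least a constant times $2^{L_{\max}}$, where $L_{\max}$ is the highest level ever used. The smallest timeout is $2^0 = 1$ (one atomic time unit), so each level contributes at least one unit of "width," and the monotonicity forces $n_{L_{\max}} \geq 1$ while $n_0 \geq n_1 \geq \cdots \geq n_{L_{\max}}$ means the total is bounded below by roughly $L_{\max}$ times the smallest level's contribution. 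I would make this precise to conclude $n \geq c \cdot 2^{L_{\max}}$ for some constant, or at minimum $n \geq 2^{L_{\max}}$, which upon taking logarithms yields $L_{\max} \leq \log(n)$.

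The main obstacle I anticipate is pinning down exactly which lower bound on $n$ in terms of $L_{\max}$ is both true and strong enough to give the clean $\log(n)$ bound on the \emph{number} of levels (as opposed to the maximum level index). The number of levels used is $L_{\max} + 1$ (levels $0$ through $L_{\max}$), so I must argue that $n$ grows at least exponentially in the number of levels. The cleanest route is to show that once level $L$ is used, the accumulated time at that level is at least $2^L$ (each use adds $2^L$, and it is used at least once), hence $n \geq n_{L_{\max}} \geq 2^{L_{\max}}$, giving $L_{\max} \leq \log(n)$ and therefore the number of levels is at most $\log(n) + 1$; I would then absorb the additive constant or tighten the monotonicity argument to land exactly on the stated $\log(n)$ bound. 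The delicate point is confirming that $n_{L_{\max}} \geq 2^{L_{\max}}$ genuinely holds under the selection rule — this requires checking that the level $L_{\max}$ is not merely "eligible" but actually selected and incremented, which follows because \textproc{NextTimeout} always returns the maximum eligible level and the algorithm invokes it each iteration, so the highest level reached has indeed been chosen and its counter incremented by $2^{L_{\max}}$.
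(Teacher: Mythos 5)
Your proof is correct, and it reaches the bound by a somewhat different route than the paper. The paper's proof pins down the exact moments at which new levels appear: it argues that level $L$ is first created precisely when every lower level has accumulated exactly $2^L$, so total time equals $L\cdot 2^L$ at that moment, and then bounds the level count by comparing against a faster-growing hypothetical scheme that adds a level whenever time reaches $2^L$. You instead prove the inductive invariant $n_0\geq n_1\geq\cdots\geq n_{L_{\max}}$ (each selection of level $L$ requires $n_l\geq n_L+2^L$ for all $l<L$, so the post-increment value never overtakes a lower level), combine it with the observation that a used level satisfies $n_{L_{\max}}\geq 2^{L_{\max}}$, and sum to get $n\geq (L_{\max}+1)\cdot 2^{L_{\max}}$. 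Your plan to ``tighten the monotonicity argument'' does close the $\log(n)+1$ versus $\log(n)$ gap you worry about: with $K=L_{\max}+1$ levels, $n\geq K\cdot 2^{K-1}\geq 2^{K}$ whenever $K\geq 2$, hence $K\leq\log(n)$, and the residual case $K=1$ needs only $n\geq 2$, which matches the assumption $n>1$ that the paper itself makes. Comparing the two: your argument is more self-contained and rigorous --- the paper's ``tightening to equality'' step is stated loosely, and its domination argument by the hypothetical $2^L$-scheme would, taken literally, also give a $\log(n)+1$ count --- whereas the paper's exact characterization (levels appear at times $L\cdot 2^L$) is a sharper structural fact about the scheme's dynamics. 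A further benefit of your route is that the monotonicity invariant you establish is exactly the property ($\forall i: n_i\geq n_{i+1}$) that the paper invokes again when proving the factor-two balancing lemma, so your Lemma does double duty.
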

\begin{proof}
We add a new timeout level $L$, whenever the equation $n_l\geq n_L+2^L$ is satisfied for all $0\leq l<L$ for the first time. As $n_l$ is generally a sum over powers of two ($2^l$), and as $n_L=0$ before $L$ is used for the first time, the latter condition can be tightened to $2^L=n_l$ for all $0\leq l<L$. Hence, we add a new timeout whenever the total execution time so far can be represented as $L\cdot 2^L$ for $L\in\mathbb{N}$. Assuming that $n$ is large, specifically $n>1$, the number of levels grows faster if adding levels whenever execution time can be represented as $2^L$ for $L\in\mathbb{N}$. In that case, the number of levels can be bounded by $\log(n)$ (using the binary logarithm). 
\end{proof}

\begin{lemma}\label{balancedLevelsLemma}
The total amount of execution time allocated to different (already used) timeout levels cannot differ by more than factor two.
\end{lemma}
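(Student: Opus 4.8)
The plan is to prove the stronger structural claim that, at every point during execution, the total time $n_l$ allocated to each \emph{already-used} level $l$ lies in the interval $[\,2^{L_{\max}},\,2^{L_{\max}+1}\,]$, where $L_{\max}$ denotes the highest level used so far. Since the endpoints of this interval differ by exactly a factor of two, this immediately gives the lemma. Two book-keeping facts drive the argument and I would record them first: each counter $n_l$ is a non-negative multiple of $2^l$ that only ever increases, and — as already observed in the proof of Lemma~\ref{nrLevelsLemma} — at the instant a level $L$ is used for the first time the condition tightens to $n_l = 2^{L}$ for all $l<L$, so that immediately after the birth of $L$ one has $n_0 = \cdots = n_{L} = 2^{L}$. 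I would also establish the monotonicity invariant $n_0 \ge n_1 \ge \cdots \ge n_{L_{\max}}$ by a one-line induction over the calls to \textsc{NextTimeout}: a level $L$ may be advanced only when $n_l \ge n_L + 2^L$ for all $l<L$, which is precisely the inequality needed to keep the sequence non-increasing after the update.

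The lower bound $n_l \ge 2^{L_{\max}}$ then follows by combining the birth fact with monotonicity: once $L_{\max}$ is born, every counter $n_0,\dots,n_{L_{\max}}$ starts at $2^{L_{\max}}$ and can only grow until a higher level appears. The upper bound is the substantive part, and by monotonicity it suffices to bound the largest counter $n_0$ and show it never exceeds $2^{L_{\max}+1}$. The key step is that level $0$ is never advanced once $n_0 = 2^{L_{\max}+1}$. Assuming $n_0 = 2^{L_{\max}+1}$, I would walk up the levels: using monotonicity ($n_l \le n_0$) together with the fact that $n_l$ is a multiple of $2^l$, each level $l \ge 1$ is either choosable — in which case the \emph{highest-choosable-level} rule makes the scheme advance a level $\ge 1$ rather than level $0$ — or else the failure of its choosability condition forces $n_l = 2^{L_{\max}+1}$ exactly, so the cascade continues. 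If the cascade reaches the top without meeting a choosable level, then $n_0 = \cdots = n_{L_{\max}} = 2^{L_{\max}+1}$, whereupon the birth condition of the fresh level $L_{\max}+1$ is satisfied and that level is chosen. In every case a level above $0$ is advanced, so $n_0$ stays at $2^{L_{\max}+1}$, and by monotonicity no other used counter exceeds it either.

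Assembling the two bounds, every used level satisfies $2^{L_{\max}} \le n_l \le 2^{L_{\max}+1}$, so the ratio between the largest and smallest allocated times is at most two, as claimed. I expect the cascade in the upper-bound step to be the main obstacle: it is the only place where the exact tie-breaking rule (always pick the \emph{highest} level whose condition holds, so that a brand-new high level with a small counter can be chosen before an intermediate one) and the integrality of the counters are both indispensable. Overlooking this — in particular tacitly assuming that an intermediate level must become choosable before a higher one — makes the factor-two bound appear to fail on small traces, so the plan hinges on getting this step right; the remaining ingredients (monotonicity, the lower bound, and the birth identity inherited from Lemma~\ref{nrLevelsLemma}) are routine.
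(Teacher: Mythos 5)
Your proof is correct, and it reaches the bound by a genuinely different route than the paper. The paper argues by contradiction on the first violation: by monotonicity ($n_0\geq n_1\geq\cdots$), any violation must read $n_0>2\cdot n_L$ with $L$ the largest used level, and it could only have been created by the last selection being either a brand-new level (harmless, since at a birth all counters equal $2^L$) or level $0$; in the latter case the paper bounds the consecutive gaps $\delta_i=n_i-n_{i+1}\leq 2^i$ (a gap of $2^{i+1}$ or more would have made level $i+1$ or higher choosable, contradicting the selection of level $0$) and telescopes them to get $n_0-n_L<2^L\leq n_L$, hence $n_0\leq 2\cdot n_L$ even after the update --- a contradiction. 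You instead establish a standing invariant, $2^{L_{\max}}\leq n_l\leq 2^{L_{\max}+1}$ for all used levels, whose upper half rests on your cascade argument that level $0$ is never advanced once $n_0=2^{L_{\max}+1}$. Both proofs turn on the same three ingredients --- monotonicity, counters moving in steps of $2^l$, and the highest-choosable-level rule --- but they deploy them differently: the paper's ceiling on $n_0$ is adaptive ($n_0\leq n_L+2^L$ whenever level $0$ is advanced) and its argument is local to the single step that would create the first violation, which makes it shorter; your absolute ceiling is a stronger structural statement, showing in addition that all counters synchronize to exactly $2^{L}$ at the birth of each level $L$, which as a by-product supplies the justification for the terse ``can be tightened to $2^L=n_l$'' step that the paper's proof of Lemma~\ref{nrLevelsLemma} (and the new-level case of this lemma's proof) invokes without further argument.
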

\begin{proof}
Assume the allocated time differs by more than factor two between two timeout levels, i.e.\ $\exists l_1,l_2:n_{l_1}>2\cdot n_{l_2}$ (and $n_{l_1},n_{l_2}\neq 0$). Consider the situation in which this happens for the first time. Since $\forall i:n_i\geq n_{i+1}$, we must have $n_0>2\cdot n_L$ where $L$ is the largest timeout level used so far. This was not the case previously so we either selected timeout level 0 or a new timeout level $L$ in the last step. If we selected a new timeout level $L$ then it was $n_l\geq n_L+2^L$ for all $0\leq l<L$ which can be tightened to $\forall 0\leq l<L:n_l=2^L$ (exploiting that $n_L=0$ previously and that timeouts are powers of two). Hence, selecting a new timeout cannot increase the maximal ratio of time per level. Assume now that timeout level 0 was selected. Denote by $\delta_{i}=n_i-n_{i+1}$ for $i<L$ the difference in allocated execution time between consecutive levels, before the last selection. It is $\delta_{i}\leq 2^{i}$ since $n_{i}$ is increased in steps of size $2^{i}$ and strictly smaller than $2^{i+1}$ (otherwise, level $i+1$ or a higher one would have been selected). It was $n_0-n_L=\sum_{0\leq i<L}\delta_i\leq \sum_{0\leq i<L}2^i< 2^L$. On the other side, it was $n_L\geq 2^L$ (as $n_L\neq0$ and since $n_L$ is increased in steps of $2^L$). After $n_0$ is increased by one, it is still $n_0\leq 2\cdot n_L$. The initial assumption leads always to a contradiction.
\end{proof}

We are now ready to provide worst-case bounds on the expected regret when evaluating queries via Skinner-G.

\begin{theorem}\label{skinnerGtheorem}
Expected execution time regret of Skinner-G is upper-bounded by $(1-1/(\log(n)\cdot m\cdot 4))\cdot n+O(\log(n))$.
\end{theorem}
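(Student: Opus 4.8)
The plan is to isolate a single ``well-calibrated'' timeout level, use the UCT guarantee to show that almost all of its iterations are productive, and then invoke the two pyramid-scheme lemmas to control how much time every other level can consume relative to that one level.

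First I would fix notation. By batch homogeneity, let $t^{*}=n^{*}/b$ be the time an optimal join order needs for one batch of its left-most table ($b$ is the number of batches per table); under the $C_{out}$-style proportionality assumption this is the minimal per-batch time over all join orders. Let $l^{*}$ be the smallest level with $2^{l^{*}}\geq t^{*}$, so $t^{*}\leq 2^{l^{*}}<2t^{*}$. Two structural facts drive everything: (i) for every level $l<l^{*}$ no join order can finish a batch within the timeout $2^{l}<t^{*}$, so those iterations yield reward $0$ and make no progress; and (ii) at level $l^{*}$ the optimal join order finishes every batch within the timeout, earning reward $1$ on each of its iterations, so the optimal average reward for the tree $T_{2^{l^{*}}}$ is $r^{*}=1$.

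Next I would apply the UCT regret bound to $T_{2^{l^{*}}}$. Writing $N_{l^{*}}$ for the number of iterations spent at level $l^{*}$, and using that rewards lie in $[0,1]$ with optimal average $1$, the cumulative reward collected there — which equals the number of batches successfully completed at that level — is at least $N_{l^{*}}-O(\log N_{l^{*}})$; equivalently, at most $O(\log N_{l^{*}})$ iterations at level $l^{*}$ are wasted. Independently, each successful completion permanently advances the offset of one of the $m$ tables, and execution halts as soon as some table has all $b$ batches processed, so the total number of successful completions over all levels is at most $mb$. Combining, $N_{l^{*}}\leq mb+O(\log N_{l^{*}})$, and hence $n_{l^{*}}=N_{l^{*}}\cdot 2^{l^{*}}\leq\bigl(mb+O(\log N_{l^{*}})\bigr)\cdot 2t^{*}=2m\,n^{*}+O(t^{*}\log N_{l^{*}})$, using $bt^{*}=n^{*}$.

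Finally I would lift this to the whole run. By Lemma~\ref{nrLevelsLemma} at most $\log(n)$ levels are ever used, and by Lemma~\ref{balancedLevelsLemma} the time allocated to any level is within a factor two of that of any other, so $n_{l}\leq 2n_{l^{*}}$ for every used level $l$. Summing, $n=\sum_{l}n_{l}\leq\log(n)\cdot 2n_{l^{*}}\leq 4m\log(n)\cdot n^{*}+O(\log(n)\cdot t^{*}\log N_{l^{*}})$. Solving this for $n^{*}$ and substituting into the regret $n-n^{*}$ gives $n-n^{*}\leq\bigl(1-1/(4m\log(n))\bigr)\,n+O(\log(n))$, the additive term collecting the UCT learning regret (which is $O(\log n)$ in the large-$n$ regime, the transitory part being assumed negligible). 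The main obstacle is the clean translation of the UCT guarantee from a statement about cumulative \emph{reward} (a count of completed batches) into one about wasted \emph{time}: this requires checking that $r^{*}=1$ genuinely holds at level $l^{*}$ so the $O(\log N)$ bound applies with the correct optimal value, absorbing the two factor-two losses (rounding the timeout up to a power of two, and the balancing lemma) into the constant $4$, and justifying that $mb$ is the right worst-case count of productive iterations — which is attained when UCT spreads its reward-$1$ pulls across all $m$ tables before any single table finishes.
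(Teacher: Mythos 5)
Your proof is correct and follows essentially the same route as the paper's: it isolates the smallest sufficient timeout level $l^*$, applies the UCT regret guarantee to that level's tree, counts at most $O(m\cdot b)$ successful batch completions before termination, and combines Lemmas~\ref{nrLevelsLemma} and~\ref{balancedLevelsLemma} with the factor-two timeout rounding to obtain the $4\cdot m\cdot\log(n)$ factor. The only difference is organizational: you chain the same inequalities to upper-bound $n$ in terms of $n^*$ (via $n\leq 2\log(n)\cdot n_{l^*}\leq 4m\log(n)\cdot n^*+O(\cdot)$), whereas the paper lower-bounds $n^*$ in terms of $n$ (via $n_{l^*}\geq n/(2\log(n))$ and $n^*\geq n_{l^*,1}/(2m)$), which is algebraically equivalent.
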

\begin{proof}
Total execution time $n$ is the sum over execution time components $n_l$ that we spent using timeout level $l$, i.e.\ we have $n=\sum_{0\leq l\leq L}n_l$ where $L+1$ is the number of timeout levels used. It is $L+1\leq \log(n)$ due to Lemma~\ref{nrLevelsLemma} and $\forall l_1,l_2\in L:n_{l_1}\geq n_{l_2}/2$ due to Lemma~\ref{balancedLevelsLemma}. Hence, for any specific timeout level $l$, we have $n_l\geq n/(2\cdot\log(n))$. Denote by $l^*$ the smallest timeout, tried by the pyramid timeout scheme, which allows to process an entire batch using the optimal join order. It is $n_{l^*}\geq n/(2\cdot\log(n))$. We also have $n_{l^*}=n_{l^*,1}+n_{l^*,0}$ where $n_{l^*,1}$ designates time spent executing join orders with timeout level $l^*$ that resulted in reward $1$, $n_{l^*,0}$ designates time for executions with reward $0$. UCT guarantees that expected regret grows as the logarithm in the number of rounds (which, for a fixed timeout level, is proportional to execution time). Hence, $n_{l^*,0}\in O(\log(n_{l^*}))$ and $n_{l^*,1}\geq n_{l^*}-O(\log(n_{l^*}))$. Denote by $b$ the number of batches per table. The optimal algorithm executes $b$ batches with timeout $l^*$ and the optimal join order. Skinner can execute at most $m\cdot b-m+1\in O(m\cdot b)$ batches for timeout $l^*$ before no batches are left for at least one table, terminating execution. Since $l^*$ is the smallest timeout greater than the optimal time per batch, the time per batch consumed by Skinner-G exceeds the optimal time per batch at most by factor 2. Hence, denoting by $n^*$ time for an optimal execution, it is $n^*\geq n_{l^*,1}/(2\cdot m)$, therefore $n^*\geq (n_{l^*}-O(\log(n)))/(2\cdot m)\geq n_{l^*}/(2\cdot m)-O(\log(n))$ (since $m$ is fixed), which implies $n^*\geq n/(4\cdot m\cdot\log(n))-O(\log(n))$. Hence, the regret $n-n^*$ is upper-bounded by $(1-1/(4\cdot m\cdot\log(n)))\cdot n+O(\log(n))$. 
\end{proof}

Next, we analyze regret of Skinner-H.


\begin{theorem}
Expected execution time regret of Skinner-H is upper-bounded by $(1-1/(\log(n)\cdot m\cdot 12))\cdot n+O(\log(n))$.
\end{theorem}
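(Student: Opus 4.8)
The plan is to reduce the analysis to the Skinner-G bound (Theorem~\ref{skinnerGtheorem}) by tracking how Skinner-H splits its total execution time $n$ between the traditional optimizer and the embedded learning run. Write $n = n_T + n_G$, where $n_T$ is time spent on plans from the traditional optimizer and $n_G$ is time spent on the Skinner-G component. The first step I would establish is a time-allocation bound. Because Skinner-H alternates, spending a budget of $2^i$ on the traditional plan in round $i$ and then an equal budget on the learning approach, the two accumulated times coincide after every completed round, so any imbalance arises only in the final, partially executed round. Since the per-round budget at most doubles, the worst case gives $n_T = 2n_G + O(1)$, hence $n_G \ge n/3 - O(1)$. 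The factor $3$ here is exactly the source of the jump from $4$ to $12$ relative to Skinner-G.

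Second, I would observe that the learning component inside Skinner-H is literally the algorithm of Section~\ref{genericSub} run with its UCT trees and offsets preserved across invocations, so it behaves as a single Skinner-G execution of accumulated length $n_G$. When this component is the one that completes, Theorem~\ref{skinnerGtheorem} applies directly and yields $n^* \ge n_G/(4m\log n_G) - O(\log n_G)$. Substituting $n_G \ge n/3$, and using that $m$ is fixed and $\log n_G \le \log n$, gives $n^* \ge n/(12 m \log n) - O(\log n)$, which rearranges to the claimed regret bound $n - n^* \le (1 - 1/(12 m \log n))\,n + O(\log n)$.

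Third, I would handle the termination dichotomy. Skinner-H stops as soon as either component finishes, and the worst case for regret is the one above, where the learning component completes. If instead the traditional optimizer finishes first, execution stops strictly earlier than in the hypothetical run in which the traditional plan is ignored and only the interleaved Skinner-G is driven to completion; let $\hat n$ denote the total time of that hypothetical run. Then $n \le \hat n$ and, by the second step, $n^* \ge \hat n/(12m\log\hat n) - O(\log\hat n)$. Since $x/(12 m \log x)$ is increasing and $n \ge n^*$ together with the polynomial finishing-time bound implied by Theorem~\ref{skinnerGtheorem} forces $\log\hat n = O(\log n)$, the same lower bound $n^* \ge n/(12m\log n) - O(\log n)$ survives for the actual $n$.

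The main obstacle I expect is precisely this last case: the slogan ``finishing earlier only helps'' is intuitive but must be made quantitative, because the hypothetical Skinner-G-completing run can be longer than the actual run, so one has to verify that the additive $O(\log\cdot)$ terms remain $O(\log n)$. This is exactly where the observations $n \ge n^*$ and the polynomial relationship between Skinner-G's finishing time and $n^*$ do the work. The remaining time-allocation bookkeeping — the exact treatment of the final partial round and the constants absorbed into $O(1)$ — is routine, but it is the other place where a careless constant would inflate the factor $12$.
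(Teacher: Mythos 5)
Your proposal is correct and takes essentially the same route as the paper's proof: show that the learning component receives (asymptotically) at least one third of total execution time under the pessimistic assumption that traditional-plan invocations exhaust their budgets, then apply Theorem~\ref{skinnerGtheorem} to that share, which turns the factor $4$ into $12$. Your third step, handling the case where the traditional plan terminates first, is more explicit than the paper—which absorbs that case into its pessimistic assumption and the fact that its internal inequalities hold for partial executions—but this is a careful refinement of the same argument rather than a different approach.
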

\begin{proof}
Denote by $n_O$ and $n_L$ time dedicated to executing the traditional optimizer plan or learned plans respectively. Assuming pessimistically that optimizer plan executions consume all dedicated time without terminating, it is $n_O=\sum_{0\leq l\leq L}2^l$ for a suitable $L\in\mathbb{N}$ at any point. Also, we have $n_L\geq\sum_{0\leq l<L}2^l$ as time is divided between the two approaches. It is $n_L/n\geq (2^L-1)/(2^{L+1}+2^L-2)$ which converges to $1/3$ as $n$ grows. We obtain the postulated bound from Theorem~\ref{skinnerGtheorem} by dividing the ``useful'' (non-regret) part of execution time by factor three.
\end{proof}



The following theorem is relevant if traditional query optimization works well (and learning creates overheads).

\begin{theorem}\label{skinnerHtraditionalTheorem}
The maximal execution time regret of Skinner-H compared to traditional query execution is $n\cdot 4/5$.
\end{theorem}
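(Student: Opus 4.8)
The plan is to account for Skinner-H's total time exactly and compare it against the time a pure traditional execution would take, then argue that in the worst case essentially all of the extra time is regret. Write $T$ for the time the traditional optimizer's plan needs to run to completion (this is the baseline we compare against). Recall from Section~\ref{hybridSub} and Figure~\ref{hybridFig} that Skinner-H runs the optimizer's plan with doubling timeouts $2^0,2^1,2^2,\ldots$ and interleaves a learning phase of matching length after each optimizer invocation. The first key observation is that the optimizer plan terminates precisely during the invocation whose timeout first exceeds $T$: if $k$ is chosen so that $2^{k-1}<T\le 2^k$, then every optimizer invocation with timeout $2^0,\ldots,2^{k-1}$ times out (pessimistically consuming its full budget), and the invocation with timeout $2^k$ completes the plan after $T$ units.

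Next I would carry out the time bookkeeping. The optimizer invocations contribute $\sum_{i=0}^{k-1}2^i + T = (2^k-1)+T$, where the sum is the timed-out prefix and $T$ is the final, completing run. The learning phases interleaved before the terminating invocation contribute a further $\sum_{i=0}^{k-1}2^i = 2^k-1$. Hence Skinner-H's total running time is $n = 2^{k+1}-2+T$. To obtain the \emph{worst-case} regret relative to traditional execution, I would assume pessimistically that learning never completes the query and that the optimizer plan is in fact optimal, so that the only non-wasted work is the final $T$ units; everything else is overhead. Then the regret is $n-T = 2^{k+1}-2$.

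Finally I would maximize the regret fraction. Writing $\frac{n-T}{n} = 1 - \frac{T}{2^{k+1}-2+T}$, this is clearly increasing as $T$ decreases, so it is maximized by pushing $T$ to the lower end of its allowed range, $T\to 2^{k-1}$. Substituting $T=2^{k-1}$ gives $\frac{2^{k+1}-2}{2^{k+1}-2+2^{k-1}} = \frac{4x-2}{5x-2}$ with $x=2^{k-1}$, which stays strictly below $4/5$ for every finite $k$ and tends to $4/5$ as $k\to\infty$. Since we study the regime of large input and execution time, this supremum $4/5$ is exactly the claimed bound, so the regret is at most $n\cdot 4/5$.

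The main obstacle is getting the time accounting right rather than the algebra: one must pin down which optimizer invocations time out versus complete, and count the interleaved learning phases correctly (in particular, that the learning phase matching the final terminating optimizer run does not execute). The conceptual crux is recognizing that the extremal case occurs when $T$ sits just above a power of two, so the terminating timeout $2^k$ overshoots $T$ by nearly a factor of two while the preceding timed-out runs plus the equal-length learning phases have already consumed the bulk of the budget; this configuration is precisely what produces the constant $4/5$.
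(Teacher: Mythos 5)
Your proposal is correct and takes essentially the same approach as the paper's proof: both decompose Skinner-H's running time into the timed-out optimizer invocations, the interleaved learning phases, and the final completing run, and both exploit the doubling-timeout structure to bound the two wasted components by roughly twice the traditional execution time $n^*$, yielding $n\le 5n^*$ and hence regret at most $n\cdot 4/5$. The only difference is granularity: the paper coarsely bounds each wasted component by $2n^*$, whereas you do exact bookkeeping and an explicit extremal analysis (worst case when $T$ sits just above a power of two), which shows additionally that the bound is a strict supremum approached in the limit.
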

\begin{proof}
Denote by $n^*$ execution time of the plan produced by the traditional optimizer. Hence, Skinner-H terminates at the latest once the timeout for the traditional approach reaches at most $2\cdot n^*$ (since the timeout doubles after each iteration). The accumulated execution time of all prior invocations of the traditional optimizer is upper-bounded by $2\cdot n^*$ as well. At the same time, the time dedicated to learning is upper-bounded by $2\cdot n^*$. Hence, the total regret (i.e., added time compared to $n^*$) is upper-bounded by $n\cdot 4/5$.
\end{proof}

Finally, we analyze expected regret of Skinner-C.

\begin{theorem}
Expected execution time regret of Skinner-C is upper-bounded by $(1-1/m)\cdot n+O(\log(n))$.\label{skinnerCadditiveTheorem}
\end{theorem}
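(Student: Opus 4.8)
The plan is to mirror the proof of Theorem~\ref{skinnerGtheorem}, using that Skinner-C incurs neither the $\log(n)$ factor coming from the pyramid timeout scheme (there are no timeout levels) nor the factor-two loss from exhausted timeouts (progress is never discarded). It therefore suffices to establish the lower bound $n^*\ge n/m-O(\log(n))$ on the optimal execution time, since the stated bound on the regret $n-n^*$ follows by rearranging.

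First I would fix notation consistent with the simplifying assumptions of Section~\ref{regretSub}. By homogeneity, each join order $j$ processes its left-most table at a fixed rate, and since offsets only exclude left-most-table tuples, all orders sharing a left-most table pool their progress. I group the join orders into at most $m$ classes according to their left-most table; the progress of a class is the fraction of that table already processed, hence lies in $[0,1]$, and Skinner-C terminates exactly when some class reaches progress $1$. Under the $C_{out}$-style assumption, an optimal join order processes its entire left-most table in time $n^*$, i.e.\ at rate $1/n^*$, and this is the maximal achievable rate; over a time slice of length $b$ it thus earns the maximal reward $r^*=b/n^*\in[0,1]$, so an optimal-rate order is exactly the arm UCT treats as optimal.

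The core of the argument is the decomposition $n=n_1+n_0$, where $n_1$ is the time spent on orders running at the optimal rate $1/n^*$ and $n_0$ the time spent on strictly slower orders. Because there are finitely many join orders and UCT selects each strictly-suboptimal arm only $O(\log(n))$ times (time slices having fixed length $b$), I obtain $n_0=O(\log(n))$ in expectation. For $n_1$, the optimal-rate orders generate total progress exactly $n_1/n^*$, and this progress is spread over the at most $m$ classes, each of which can absorb progress at most $1$; hence $n_1/n^*\le m$, i.e.\ $n^*\ge n_1/m$. Combining the two pieces yields $n^*\ge(n-n_0)/m=n/m-O(\log(n))$, and therefore expected regret $n-n^*\le(1-1/m)\cdot n+O(\log(n))$.

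I expect the bound on $n_1$ to be the delicate step, not the bound on $n_0$. The tempting shortcut---bounding the total accumulated reward (total progress) by $m$ and plugging in UCT's cumulative reward-regret bound $\sum_t(r^*-r_t)=O(\log(n))$---is too lossy: it yields only $n^*\ge n/(m+O(\log(n)))$, because the $O(\log(n))$ reward deficit is multiplied by $n^*$ when converted back to time. The resolution is to keep the structural factor $m$ (arising because progress is not shared across the $m$ left-most tables, with the worst case realized when all $m$ classes tie for the optimal rate) separate from the additive exploration overhead $O(\log(n))$, which is achieved precisely by bounding the optimal-rate time $n_1$ directly rather than routing everything through the cumulative-reward inequality. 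A secondary subtlety worth stating explicitly is that UCT must be invoked in the form ``each suboptimal arm is pulled $O(\log(n))$ times'', as this is what keeps $n_0$ additive in $\log(n)$.
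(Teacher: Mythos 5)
Your proof is correct and reaches the stated bound, but by a genuinely different route than the paper. The paper works entirely in reward space: writing $R$ for total reward, $r$ for average reward per time slice, and $r^*$ for the optimal reward per slice, it uses $n=R/r$ and $n^*=1/r^*$, rewrites regret as $n\cdot(1-r/(R\cdot r^*))$, invokes the average-reward form of the UCT guarantee, $r^*-r\in O(\log(n)/n)$, and caps $R\leq m$ because the reward $R_t$ accumulated per left-most table is at most $1$ at termination. You instead decompose \emph{time}, $n=n_1+n_0$, invoke the pull-count form of the UCT guarantee (each strictly suboptimal join order is chosen $O(\log(n))$ times) to get $n_0=O(\log(n))$, and cap $n_1\leq m\cdot n^*$ by the progress-capacity argument; your ``classes'' play exactly the role of the paper's $R_t$, so the structural factor $m$ arises identically in both proofs. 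What each buys: your version makes the origin of the two terms more transparent (exploration time versus progress split across $m$ left-most tables), but its constants depend on the reward gaps between join orders, and treating the leaves of the UCT tree as flat bandit arms quietly relies on the recursive part of the Kocsis--Szepesv\'ari analysis; the paper's version needs only the root-level average-reward bound but must treat $r^*$ as a constant when converting the reward deficit into time. Both are admissible under the section's ``fixed query properties'' convention.

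One correction to your closing discussion: the ``tempting shortcut'' you dismiss is, in essence, the paper's actual proof, and it is not lossy. The loss you describe arises only if one passes to the ratio form $n^*\geq n/(m+O(\log(n)))$. Instead, from $n\cdot r^*\leq R+n\cdot(r^*-r)\leq m+O(\log(n))$ one multiplies by $n^*=1/r^*$ --- a query-dependent constant under the stated assumptions --- to get $n\leq m\cdot n^*+O(\log(n))$, i.e.\ $n^*\geq n/m-O(\log(n))$, which is precisely the inequality you derive from the time decomposition. So the cumulative-reward route and yours are equally tight; the $O(\log(n))$ reward deficit being ``multiplied by $n^*$'' is harmless because $n^*$ is held fixed in the asymptotics in $n$, exactly as in the paper's own step bounding $n\cdot(r^*-r)/(R\cdot r^*)$ by $O(\log(n))$.
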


\begin{proof}
Regret is the difference between optimal execution time, $n^*$, and actual time, $n$. It is $n-n^*=n\cdot(1-n^*/n)$. Denote by $R$ the total reward achieved by Skinner-C during query execution and by $r$ the average reward per time slice. It is $n=R/r$. Denote by $r^*$ the optimal reward per time slice. Reward is calculated as the relative tuple index delta in the left-most table (i.e., tuple index delta in left-most table divided by table cardinality). An optimal execution always uses the same join order and therefore terminates once the accumulated reward reaches one. Hence, we obtain $n^*=1/r^*$. We can rewrite regret as $n-n^*=n\cdot(1-(1/r^*)/(R/r))=n\cdot (1-r/(R\cdot r^*))$. The difference between expected reward and optimal reward is bounded as $r^*-r\in O(\log(n)/n)$~\cite{Kocsis2006}. Substituting $r$ by $r^*-(r^*-r)$, we can upper-bound regret by $n\cdot(1-1/R)+O(\log(n))$. Denote by $R_t\leq R$ rewards accumulated over time slices in which join orders starting with table $t\in T$ were selected. Skinner-C terminates whenever $R_t=1$ for any $t\in T$. Hence, we obtain $R\leq m$ and $n\cdot(1-1/m)+O(\log(n))$ as upper bound on expected regret.
\end{proof}

Instead of the (additive) difference between expected and optimal execution time, we can also consider the ratio.

\begin{theorem}
The ratio of expected to optimal execution time for Skinner-C is upper-bounded and that bound converges to $m$ as $n$ grows. 
\end{theorem}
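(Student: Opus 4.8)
The plan is to obtain the multiplicative bound as a direct corollary of the additive regret bound already established in Theorem~\ref{skinnerCadditiveTheorem}. That result states that the regret $n-n^*$ is at most $(1-1/m)\cdot n+O(\log(n))$, where $n$ is Skinner-C's execution time and $n^*$ the optimal time. Since regret is by definition $n-n^*$, I would first rearrange this inequality to isolate a lower bound on the optimal time, namely $n^*\geq n-\left((1-1/m)\cdot n+O(\log(n))\right)=n/m-O(\log(n))$.

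With this lower bound in hand, I would form the ratio of expected to optimal execution time and substitute. Dividing $n$ by the lower bound on $n^*$ yields
\begin{equation}
\frac{n}{n^*}\leq\frac{n}{n/m-O(\log(n))}=\frac{m}{1-O(m\log(n)/n)}.
\end{equation}
The right-hand side is the quantity whose behavior I would then analyze as $n$ grows.

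The final step is to take the limit as the amount of input data (and hence execution time $n$) grows large, holding the number of joined tables $m$ fixed as a query parameter, consistent with the analysis regime assumed throughout Section~\ref{regretSub}. Since $\log(n)/n\to 0$, the correction term $O(m\log(n)/n)$ vanishes, so the denominator tends to $1$ and the ratio bound converges to $m$. This shows both that the ratio is upper-bounded (for all sufficiently large $n$ the denominator is bounded away from zero) and that the bound approaches $m$.

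The main obstacle I anticipate is purely a matter of care with the $O(\log(n))$ term rather than any substantive difficulty: one must verify that the denominator $n/m-O(\log(n))$ is strictly positive in the large-$n$ regime so that the division and the ratio are well defined, and that the hidden constant does not depend on $n$ in a way that would spoil the limit. Given that $m$ is treated as constant and $n$ dominates $\log(n)$, this is routine, so the bulk of the argument is the one-line rearrangement of Theorem~\ref{skinnerCadditiveTheorem} followed by the limit.
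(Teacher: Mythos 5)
Your proposal is correct and follows essentially the same route as the paper's own proof: both rearrange the additive regret bound of Theorem~\ref{skinnerCadditiveTheorem} into the lower bound $n^*\geq n/m-O(\log(n))$ and then observe that the resulting ratio bound on $n/n^*$ converges to $m$ as $n$ grows with $m$ fixed. Your version is, if anything, slightly more explicit about the intermediate ratio expression and the positivity of the denominator, but the substance is identical.
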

\begin{proof}
Let $a=n-n^*$ be additive regret, i.e.\ the difference between actual and optimal execution time. It is $n^*=n-a$ and, as $a\leq (1-1/m)\cdot n+O(\log(n))$ due to Theorem~\ref{skinnerCadditiveTheorem}, it is $n^*\geq n-(1-1/m)\cdot n-O(\log(n))=n/m-O(\log n)=n\cdot(1/m-O(\log(n))/n)$. Optimal execution time is therefore lower-bounded by a term that converges to $n/m$ as $n$ grows. Then, the ratio $n/n^*$ is upper-bounded by $m$.
\end{proof}

\section{Implementation and Evaluation}
\label{experimentsSec}
\pgfplotsset{
    discard if not/.style 2 args={
        x filter/.code={
            \edef\tempa{\thisrowno{#1}}
            \edef\tempb{#2}
            \ifx\tempa\tempb
            \else
                \def\pgfmathresult{inf}
            \fi
        }
    }
}

\def\addChainPlotTime#1#2{
\addplot table[x index=1, y index=2, col sep=comma, discard if not={0}{CHAINmctsS}] {#1};
\addplot table[x index=1, y index=2, col sep=comma, discard if not={0}{CHAINEddy}] {#1};
\addplot table[x index=1, y index=2, col sep=comma, discard if not={0}{CHAINOpt}] {#1};
\addplot table[x index=1, y index=2, col sep=comma, discard if not={0}{CHAINReopt}] {#1};
\addplot table[x index=1, y index=2, col sep=comma, discard if not={0}{CHAINpostgres}] {#1};
\addplot table[x index=1, y index=2, col sep=comma, discard if not={0}{CHAINpostgresMCTS-Multi-Pure}] {#1};
\addplot table[x index=1, y index=2, col sep=comma, discard if not={0}{CHAINpostgresMCTS-Multi-Hybrid}] {#1};
\addplot table[x index=1, y index=2, col sep=comma, discard if not={0}{CHAINadaptive}] {#1};
\addplot table[x index=1, y index=2, col sep=comma, discard if not={0}{CHAINadaptiveMCTS-Multi-Pure}] {#1};
\addplot table[x index=1, y index=2, col sep=comma, discard if not={0}{CHAINadaptiveMCTS-Multi-Hybrid}] {#1};
\addplot table[x index=1, y index=2, col sep=comma, discard if not={0}{CHAINmonet}] {#1};
\draw [red, ultra thick] (axis cs:\pgfkeysvalueof{/pgfplots/xmin},#2) -- (axis cs:\pgfkeysvalueof{/pgfplots/xmax},#2);
}

\def\addStarPlotTime#1#2{
\addplot table[x index=1, y index=2, col sep=comma, discard if not={0}{STARmctsS}] {#1};
\addplot table[x index=1, y index=2, col sep=comma, discard if not={0}{STAREddy}] {#1};
\addplot table[x index=1, y index=2, col sep=comma, discard if not={0}{STAROpt}] {#1};
\addplot table[x index=1, y index=2, col sep=comma, discard if not={0}{STARReopt}] {#1};
\addplot table[x index=1, y index=2, col sep=comma, discard if not={0}{STARpostgres}] {#1};
\addplot table[x index=1, y index=2, col sep=comma, discard if not={0}{STARpostgresMCTS-Multi-Pure}] {#1};
\addplot table[x index=1, y index=2, col sep=comma, discard if not={0}{STARpostgresMCTS-Multi-Hybrid}] {#1};
\addplot table[x index=1, y index=2, col sep=comma, discard if not={0}{STARadaptive}] {#1};
\addplot table[x index=1, y index=2, col sep=comma, discard if not={0}{STARadaptiveMCTS-Multi-Pure}] {#1};
\addplot table[x index=1, y index=2, col sep=comma, discard if not={0}{STARadaptiveMCTS-Multi-Hybrid}] {#1};
\addplot table[x index=1, y index=2, col sep=comma, discard if not={0}{STARmonet}] {#1};
\draw [red, ultra thick] (axis cs:\pgfkeysvalueof{/pgfplots/xmin},#2) -- (axis cs:\pgfkeysvalueof{/pgfplots/xmax},#2);
}


We evaluate the performance of SkinnerDB experimentally. Additional results can be found in the appendix. 

\subsection{Experimental Setup}


Skinner-G(X) is the generic Skinner version (see Section~\ref{genericSub}) on top of database system X in the following. Skinner-H(X) is the hybrid version on system X. We execute Skinner on top of MonetDB (Database Server Toolkit v1.1 (Mar2018-SP1))~\cite{Boncz2008} and Postgres (version 9.5.14)~\cite{Postgres}. We use different mechanisms to force join orders for those systems. Postgres has dedicated knobs to force join orders. For MonetDB, we ``brute-force'' join orders by executing each join as a separate query, generating multiple intermediate result tables. Skinner-C, described in Section~\ref{customizedSub}, uses a specialized execution engine. We set $w=\sqrt{2}$ in the UCT formula for Skinner-G and Skinner-H and $w=10^{-6}$ for Skinner-C. Unless noted otherwise, we use a timeout of $b=500$ loop iterations for Skinner-C (i.e., thousands or even tens of thousands of join order switches per second). For Skinner-G and -H, we must use much higher timeouts, starting from one second. All SkinnerDB-specific components are implemented in Java. Our current Skinner-C version only allows to parallelize the pre-processing step. Extending our approach to parallel join processing is part of our future work. To separate speedups due to join ordering from speedups due to parallelization, we compare a subset of baselines in single- as well as in  multi-threaded mode. The following experiments are executed on a Dell PowerEdge R640 server with 2 Intel Xeon 2.3~GHz CPUs and 256~GB of RAM. 


\subsection{Performance on Join Order Benchmark}

\begin{table}[t]
\caption{Performance of query evaluation methods on the join order benchmark - single-threaded.\label{jobTable}}
\begin{tabular}{p{1.75cm}p{1.25cm}p{1.25cm}p{1.25cm}p{1.25cm}}
\toprule[1pt]
\textbf{Approach} & \textbf{Total Time} & \textbf{Total Card.\ } & \textbf{Max.\ Time} & \textbf{Max.\ Card.\ }\\
\midrule[1pt]
Skinner-C & 183 & 112M & 9 & 18M \\
\midrule
Postgres & 726 & 681M & 59 & 177M \\
S-G(PG) & 13,348 & N/A & 840 & N/A \\
S-H(PG) & 2,658 & N/A & 234 & N/A \\
\midrule
MonetDB & 986 & 2,971M & 409 & 1,186M \\
S-G(MDB) & 1,852 & N/A & 308 & N/A\\
S-H(MDB) & 762 & N/A & 114 & N/A\\
\bottomrule[1pt]
\end{tabular}
\end{table}

\begin{table}[t]
\caption{Performance of query evaluation methods on the join order benchmark - multi-threaded.\label{jobTableMT}}
\begin{tabular}{p{1.75cm}p{1.25cm}p{1.25cm}p{1.25cm}p{1.25cm}}
\toprule[1pt]
\textbf{Approach} & \textbf{Total Time} & \textbf{Total Card.\ } & \textbf{Max.\ Time} & \textbf{Max.\ Card.\ }\\
\midrule[1pt]
Skinner-C & 135 & 112M & 7 & 18M \\
\midrule
MonetDB & 105 & 2,971M & 26 & 1,186M \\
S-G(MDB) & 1,450 & N/A & 68 & N/A \\
S-H(MDB) &345 & N/A & 86 & N/A \\
\bottomrule[1pt]
\end{tabular}
\end{table}

We evaluate approaches on the join order benchmark~\cite{Gubichev2015}, a benchmark on real, correlated data. We follow the advice of the paper authors and explicitly prevent Postgres from choosing bad plans involving nested loops joins. Tables~\ref{jobTable} and \ref{jobTableMT} compare different baselines in single-threaded and for Skinner, and MonetDB, in multi-threaded mode (our server runs Postgres~9.5 which is not multi-threaded). We compare approaches by total and maximal (per query) execution time (in seconds). Also, we calculate the accumulated intermediate result cardinality of executed query plans. This metric is a measure of optimizer quality that is independent of the execution engine. Note that we cannot reliably measure cardinality for Skinner-G and Skinner-H since we cannot know which results were generated by the underlying execution engine before the timeout.

Clearly, Skinner-C performs best for single-threaded performance. Also, its speedups are correlated with significant reductions in intermediate result cardinality values. As verified in more detail later, this suggests join order quality as the reason. For multi-threaded execution on a server with 24 cores, MonetDB slightly beats SkinnerDB. Note that our system is implemented in Java and does not currently parallelize the join execution phase. 

When it comes to Skinner on top of existing databases, the results are mixed. For Postgres, we are unable to achieve speedups in this scenario (as shown in the appendix, there are cases involving user-defined predicates where speedups are however possible). Postgres exploits memory less aggressively than MonetDB, making it more likely to read data from disk (which makes join order switching expensive). For single-threaded MonetDB, however, the hybrid version  reduces total execution time by nearly 25\% and maximal time per query by factor four, compared to the original system. This is due to just a few queries where the original optimizer selects highly suboptimal plans.

\begin{table}[t]
\caption{Performance of join orders in different execution engines for join order benchmark - single threaded.\label{joinOrdersTable}}
\begin{tabular}{p{1.5cm}p{1.5cm}p{1.75cm}p{1.75cm}}
\toprule[1pt]
\textbf{Engine} & \textbf{Order} & \textbf{Total Time} & \textbf{Max.\ Time} \\
\midrule[1pt]
Skinner & Skinner & 183 & 9 \\
& Optimal & 180 & 7 \\
\midrule
Postgres & Original & 726 & 59 \\
& Skinner & 567 & 14 \\
& Optimal & 555 & 14 \\
\midrule
MonetDB & Original & 986 & 409 \\
& Skinner & 138 & 7 \\
& Optimal & 134 & 6 \\
\bottomrule[1pt]
\end{tabular}
\end{table}

\begin{table}[t]
\caption{Performance of join orders in different execution engines for join order benchmark - multi-threaded.\label{joinOrdersTableMT}}
\begin{tabular}{p{1.5cm}p{1.5cm}p{1.75cm}p{1.75cm}}
\toprule[1pt]
\textbf{Engine} & \textbf{Order} & \textbf{Total Time} & \textbf{Max.\ Time} \\
\midrule[1pt]
Skinner & Skinner & 135 & 7 \\
& Optimal & 129 & 7 \\
\midrule
MonetDB & Original & 105 & 26 \\
& Skinner & 53 & 2.7 \\
& Optimal & 51 & 2.3  \\
\bottomrule[1pt]
\end{tabular}
\end{table}

To verify whether Skinner-C wins because of better join orders, we executed final join orders selected by Skinner-C in the other systems. We also used optimal join orders, calculated according to the $C_{out}$ metric. Tables~\ref{joinOrdersTable} and \ref{joinOrdersTableMT} show that Skinner's join orders improve performance uniformly, compared to the original optimizer. Also, Skinner's execution time is very close to the optimal order, proving the theoretical guarantees from the last section pessimistic. 

\subsection{Further Analysis}

\begin{table}
\caption{Impact of replacing reinforcement learning by randomization.\label{randomizationTable}}
\begin{tabular}{llll}
\toprule[1pt]
\textbf{Engine} & \textbf{Optimizer} & \textbf{Time} & \textbf{Max.\ Time}\\
\midrule[1pt]
Skinner-C & Original & 182 & 9 \\
& Random & 2,268 & 332 \\
\midrule
Skinner-H(PG) & Original & 2,658 & 234 \\
& Random & 3,615 & 250 \\
\midrule
Skinner-H(MDB) & Original & 761 & 114 \\
& Random & $\geq$ 5,743 & $\geq$ 3,600 \\
\bottomrule[1pt]
\end{tabular}
\end{table}

\begin{table}[t]
\caption{Impact of SkinnerDB features.\label{featuresTable}}
\begin{tabular}{p{4.8cm}p{1.25cm}p{1.25cm}}
\toprule[1pt]
\textbf{Enabled Features} & \textbf{Total Time} & \textbf{Max.\ Time} \\
\midrule[1pt]
indexes, parallelization, learning & 135 & 7 \\
parallelization, learning & 162 & 9  \\
learning & 185 & 9 \\
none & 2,268 & 332 \\
\bottomrule[1pt]
\end{tabular}
\end{table}

We experiment with different variants of SkinnerDB. First of all, we compare learning-based selection against randomized selection. Table~\ref{randomizationTable} shows the performance penalty for randomized selection. Clearly, join order learning is crucial for performance. In Table~\ref{featuresTable}, we compare the impact of randomization to the impact of parallelizing pre-processing and adding hash indices on all join columns (which SkinnerDB exploits if the corresponding table is not used in pre-processing). Clearly, join order learning is by far the most performance-relevant feature of SkinnerDB.

\begin{figure}[t]
\subfigure[MonetDB spends most time executing a few expensive queries.]{
\includegraphics{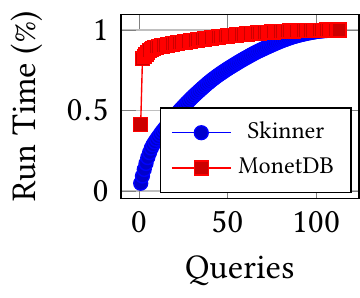}
}
\subfigure[SkinnerDB realizes high speedup for two expensive queries.]{
\includegraphics{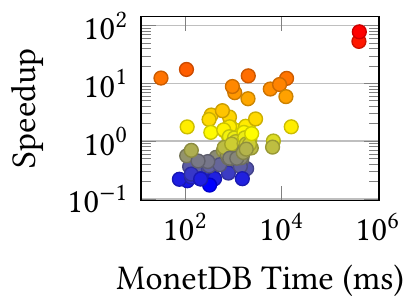}
}
\caption{Analyzing the source of SkinnerDB's speedups compared to MonetDB.\label{monetVsSkinnerFig}}
\end{figure}

We analyze in more detail where the speedups compared to MonetDB come from (all results refer to single-threaded mode). Figure~\ref{monetVsSkinnerFig} shows on the left hand side the percentage of execution time, spent on the top-k most expensive queries (x axis). MonetDB spends the majority of execution time executing two queries with highly sub-optimal join orders (we reached out to the MonetDB team to make sure that no straight-forward optimizations remove the problem). On the right side, we draw speedups realized by Skinner versus MonetDB's query execution time. MonetDB is actually faster for most queries while SkinnerDB has highest speedups for the two most expensive queries. Since those queries account for a large percentage of total execution time, Skinner-C outperforms MonetDB in single-threaded mode. 

\begin{figure}[t]
\subfigure[The growth of the search tree slows down over time.]{
\includegraphics{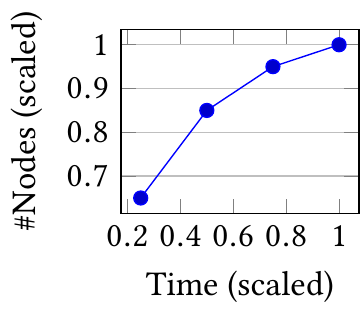}
}
\subfigure[SkinnerDB spends most time executing one or two join orders.]{
\includegraphics{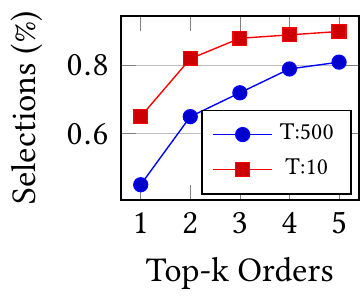}
}
\caption{Analysis of convergence of SkinnerDB.\label{convergenceFig}}
\end{figure}

Figure~\ref{convergenceFig} analyzes convergence of Skinner-C to optimal join orders. On the left side, we show that the growth of the search tree slows as execution progresses (a first indication of convergence). On the right side, we show that Skinner-C executes one (with a timeout of $b=10$ per time slice) or two (with a timeout of $b=500$, allowing less iterations for convergence) join orders for most of the time.

\begin{figure}[t]
\subfigure[Search tree size is correlated with query size.\label{uctMemFig}]{
\includegraphics{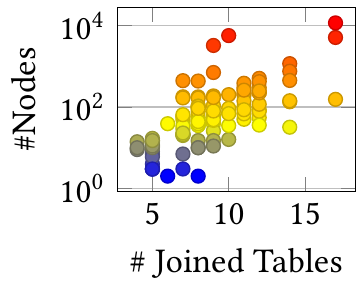}
}
\subfigure[Size of join order progress tracker tree.\label{trackerMemFig}]{
\includegraphics{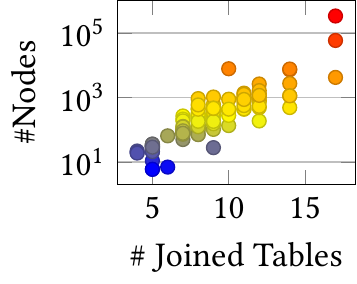}
}
\subfigure[Size of final result tuple indices.\label{finalMemFig}]{
\includegraphics{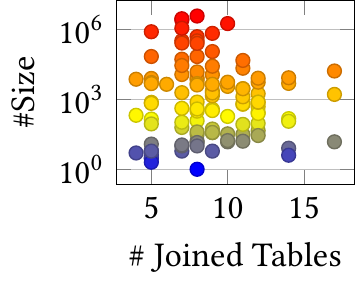}
}
\subfigure[Combined size of intermediate results, progress, and tree.\label{allMemFig}]{
\includegraphics{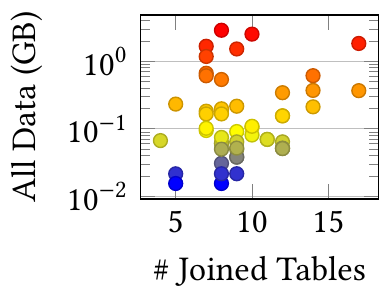}
}
\caption{Memory consumption of SkinnerDB.\label{memoryFigure}}
\end{figure}

Finally, we analyze memory consumption of Skinner-C. Compared to traditional systems, Skinner-C maintains several additional, auxiliary data structures. First, it keeps the UCT search tree. Second, it maintains a tree associating each join order to the last execution state (one tuple index for each base table). Third, it must keep the tuple vectors of all join result tuples in a hash table to eliminate duplicates from different join orders. On the other side, Skinner-C does not maintain any intermediate results as opposed to other systems (due to depth-first multiway join execution). Figure~\ref{memoryFigure} shows the maximal sizes of the aforementioned data structures during query executions as a function of query size. Storing result tuple index vectors (Figure~\ref{finalMemFig}) has dominant space complexity, followed by the progress tracker, and the UCT search tree. Overall, memory consumption is not excessive compared to traditional execution engines.

\section{Conclusion}
\label{conclusionSec}

We introduce a new quality criterion for query evaluation strategies: we consider the distance (either difference or ratio) between expected execution time and processing time for an optimal join order. We designed several query evaluation strategies, based on reinforcement learning, that are optimized for that criterion. We implemented them in SkinnerDB, leading to the following insights. First, regret-bounded query evaluation leads to robust performance even for difficult queries, given enough data to process. Second, performance gains by robust join ordering can outweigh learning overheads for benchmarks on real data. Third, actual performance is significantly better than our theoretical worst-case guarantees. Fourth, to realize the full potential of our approach, an (in-query) learning-based optimizer must be paired with a specialized execution engine.


\bibliographystyle{ACM-Reference-Format}
\bibliography{../../library}

\appendix

\section{Additional Experiments}
\label{experimentsApp}


We show results for additional benchmarks and baselines. As baseline (and underlying execution engine for SkinnerDB), we add a commercial database system ((``ComDB'') with an adaptive optimizer. We also re-implemented several research baselines (we were unsuccessful in obtaining the original code), notably Eddies~\cite{Tzoumas2008} and the Re-optimizer~\cite{Wu2016}. Some of our implementations are currently limited to simple queries and can therefore not be used for all benchmarks. The following experiments are executed on the hardware described before, except for our micro-benchmarks on small data sets which we execute on a standard laptop with 16~GB of main memory and a 2.5~GHZ Intel i5-7200U CPU.

We use an extended version of the \textit{Optimizer Torture Benchmark} proposed by Wu et al. The idea is to create corner cases where the difference between optimal and sub-optimal query plans is significant. \textit{UDF Torture} designates in the following a benchmark with queries where each join predicate is a user-defined function and therefore a black box from the optimizer perspective. We use one good predicate (i.e., join with that predicate produces an empty result) per query while the remaining predicates are bad (i.e., they are always satisfied for the input data). We experiment with different table sizes and join graph structures. \textit{Correlation Torture} is an extended variant of the original benchmark proposed by Wu et al~\cite{Wu2016}. This benchmark introduces maximal data skew by perfectly correlating values in different table columns. As in the original benchmark, we create chain queries with standard equality join and filter predicates. Correlations between predicates and data skew make it however difficult for standard optimizers to infer the best query plan. We vary the position of the good predicate via parameter $m$ between the beginning of the chain ($m=1$) and the middle ($m=nrTables/2$). 

\pgfplotsset{
   /pgfplots/bar  cycle  list/.style={/pgfplots/cycle  list={%
        {fill=blue,mark=none},%
        {fill=blue!50,mark=none},%
        {fill=cyan,mark=none},%
        {fill=blue!70!red,mark=none},%
        {fill=green,mark=none},%
        {fill=green!50,mark=none},%
        {fill=green!70!red,mark=none},%
        {fill=yellow,mark=none},%
        {fill=yellow!50,mark=none},%
        {fill=yellow!70!red,mark=none},
        {fill=black,mark=none}
     }
   },
}

\begin{figure}[t]
\center
\includegraphics{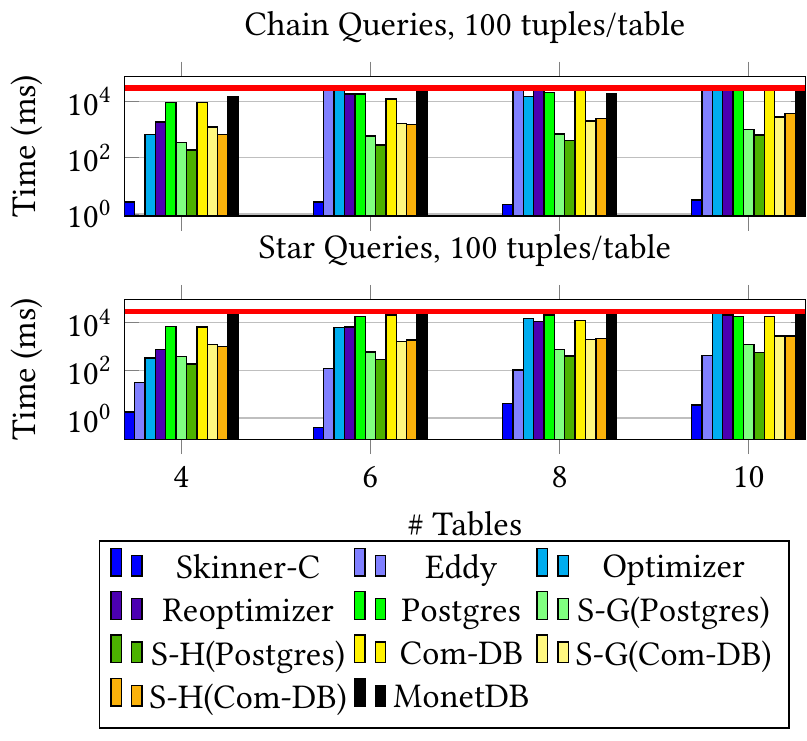}

\caption{UDF Torture benchmark.\label{udfTortureFig}}
\end{figure}

UDF predicates may hide complex code, invocations of external services, or even calls to human crowd workers. They often have to be treated as black boxes from the optimizer perspective which makes optimization hard. Figure~\ref{udfTortureFig} (this and the following figures show arithmetic averages over ten test cases) compares all baselines according to the UDF Torture benchmark described before (the red line marks the timeout per test case). Skinner-C generally performs best in this scenario and beats existing DBMS by many orders of magnitude. We compare a Java-based implementation against highly optimized DBMS execution engines. However, a high-performance execution engine cannot compensate for the impact of badly chosen join orders. Among the other baselines using the same execution engine as we do, Eddy performs best while Optimizer and Re-optimizer incur huge overheads. Re-optimization is more useful in scenarios where a few selectivity estimates need to be corrected. Here, we essentially start without any information on predicate selectivity. For Postgres, our adaptive processing strategies reduce execution time by up to factor 30 for Postgres and large queries. For the commercial DBMS with adaptive optimizer, we achieve a speedup of up to factor 15 (which is in fact a lower bound due to the timeout).

\begin{figure}[t]
\center
\includegraphics{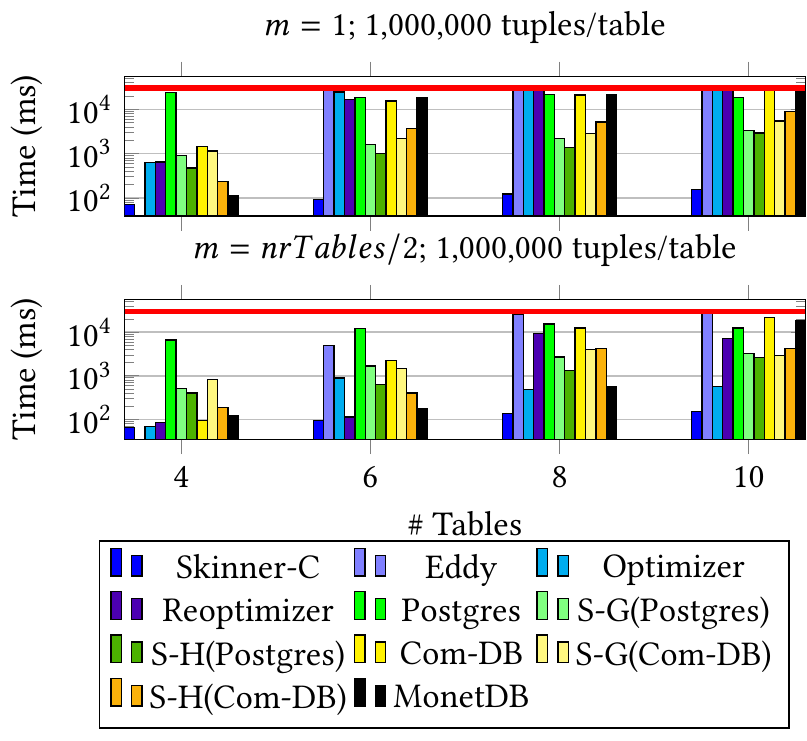}

\caption{Correlation Torture benchmark.\label{correlationLargeFig}}
\end{figure}

Even standard equality predicates can make optimization difficult due to predicate correlations. We evaluate all baselines on the Correlation Torture benchmark~\cite{Wu2016}, Figure~\ref{correlationLargeFig} shows first results. Many of the tendencies are similar to the ones in the UDF Torture benchmark. Skinner-C performs best, traditional query optimizers cope badly with strong predicate correlations. Compared to Figure~\ref{udfTortureFig}, the relative performance gap is slightly smaller. At least in this case, UDF predicates cause more problems than correlations between standard predicates. Again, our adaptive processing strategies improve performance of Postgres and the commercial DBMS significantly and for each configuration (query size and setting for $m$). 

\pgfplotsset{
   /pgfplots/bar  cycle  list/.style={/pgfplots/cycle  list={%
        {fill=blue,mark=none},%
        {fill=blue!50,mark=none},%
        {fill=cyan,mark=none},%
        {fill=blue!70!red,mark=none},%
        {fill=green,mark=none},%
        {fill=green!50,mark=none},%
        {fill=green!70!red,mark=none},%
        {fill=yellow,mark=none},%
        {fill=yellow!50,mark=none},%
        {fill=yellow!70!red,mark=none},
        {fill=black,mark=none}
     }
   },
}

\begin{figure}[h!]
\center
\includegraphics{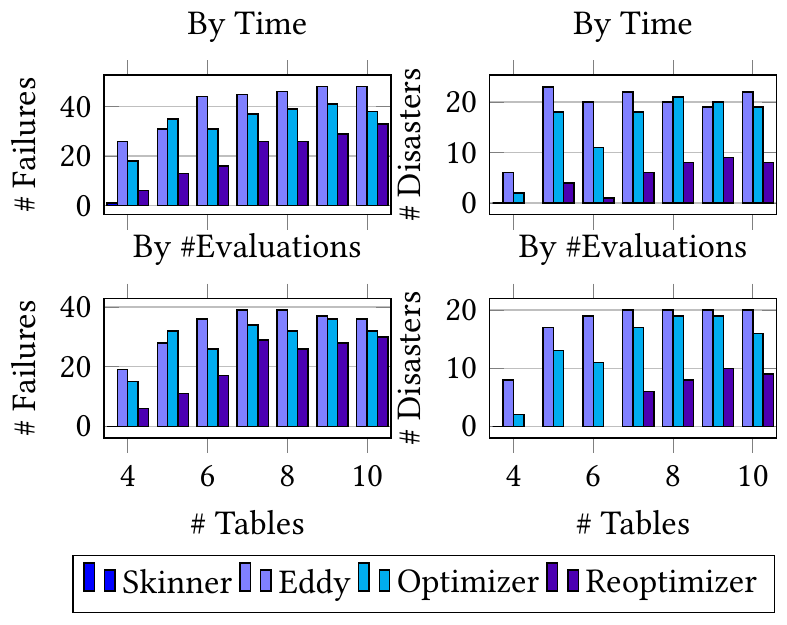}

\caption{Number of ``optimizer failures'' and ``optimizer disasters''.\label{disasterFig}}
\end{figure}

A query evaluation method that achieves bounded overhead in each single case is typically preferred over a method that oscillates between great performance and significant overheads (even if the average performance is the same). Figure~\ref{disasterFig} summarizes results for a new run of the Correlation Torture benchmark, varying number of tables, table size, as well as parameter $m$. We study robustness of optimization and focus therefore on baselines that use the same execution engine. Note that we compare baselines not only with regards to time, but also with regards to the number of predicate evaluations (see lower row) which depends only on the optimizer. We classify for each baseline a test case as \textit{optimizer failure} if evaluation time exceeds the optimum among the other baselines for that test case by factor 10. We call a test case an \textit{optimizer disaster} for factor 100. The figure shows a tight race between Eddy and the traditional optimizer. Re-optimization clearly improves robustness. However, using our regret-bounded algorithms avoids any failures or disasters and is therefore the most robust optimization method. All implementations in Figure~\ref{disasterFig} share code to the extend possible. Still, some of the baselines need to add code that could in principle decrease performance (e.g., per-tuple routing policies for Eddy). To exclude such effects, we also count the number of atomic predicate evaluations for each baseline and re-calculate failures and disasters based on that (bottom row in Figure~\ref{disasterFig}). The tendencies remain the same.

\pgfplotsset{
    discard if not/.style 2 args={
        x filter/.code={
            \edef\tempa{\thisrowno{#1}}
            \edef\tempb{#2}
            \ifx\tempa\tempb
            \else
                \def\pgfmathresult{inf}
            \fi
        }
    }
}

\def\addChainPlotTime#1#2{
\addplot table[x index=1, y index=2, col sep=comma, discard if not={0}{CHAINmctsS}] {#1};
\addplot table[x index=1, y index=2, col sep=comma, discard if not={0}{CHAINEddy}] {#1};
\addplot table[x index=1, y index=2, col sep=comma, discard if not={0}{CHAINOpt}] {#1};
\addplot table[x index=1, y index=2, col sep=comma, discard if not={0}{CHAINReopt}] {#1};
\addplot table[x index=1, y index=2, col sep=comma, discard if not={0}{CHAINpostgres}] {#1};
\addplot table[x index=1, y index=2, col sep=comma, discard if not={0}{CHAINpostgresMCTS-Multi-Pure}] {#1};
\addplot table[x index=1, y index=2, col sep=comma, discard if not={0}{CHAINpostgresMCTS-Multi-Hybrid}] {#1};
\addplot table[x index=1, y index=2, col sep=comma, discard if not={0}{CHAINadaptive}] {#1};
\addplot table[x index=1, y index=2, col sep=comma, discard if not={0}{CHAINadaptiveMCTS-Multi-Pure}] {#1};
\addplot table[x index=1, y index=2, col sep=comma, discard if not={0}{CHAINadaptiveMCTS-Multi-Hybrid}] {#1};
\addplot table[x index=1, y index=2, col sep=comma, discard if not={0}{CHAINmonet}] {#1};
\draw [red, ultra thick] (axis cs:\pgfkeysvalueof{/pgfplots/xmin},#2) -- (axis cs:\pgfkeysvalueof{/pgfplots/xmax},#2);
}

\def\addStarPlotTime#1#2{
\addplot table[x index=1, y index=2, col sep=comma, discard if not={0}{STARmctsS}] {#1};
\addplot table[x index=1, y index=2, col sep=comma, discard if not={0}{STAREddy}] {#1};
\addplot table[x index=1, y index=2, col sep=comma, discard if not={0}{STAROpt}] {#1};
\addplot table[x index=1, y index=2, col sep=comma, discard if not={0}{STARReopt}] {#1};
\addplot table[x index=1, y index=2, col sep=comma, discard if not={0}{STARpostgres}] {#1};
\addplot table[x index=1, y index=2, col sep=comma, discard if not={0}{STARpostgresMCTS-Multi-Pure}] {#1};
\addplot table[x index=1, y index=2, col sep=comma, discard if not={0}{STARpostgresMCTS-Multi-Hybrid}] {#1};
\addplot table[x index=1, y index=2, col sep=comma, discard if not={0}{STARadaptive}] {#1};
\addplot table[x index=1, y index=2, col sep=comma, discard if not={0}{STARadaptiveMCTS-Multi-Pure}] {#1};
\addplot table[x index=1, y index=2, col sep=comma, discard if not={0}{STARadaptiveMCTS-Multi-Hybrid}] {#1};
\addplot table[x index=1, y index=2, col sep=comma, discard if not={0}{STARmonet}] {#1};
\draw [red, ultra thick] (axis cs:\pgfkeysvalueof{/pgfplots/xmin},#2) -- (axis cs:\pgfkeysvalueof{/pgfplots/xmax},#2);
}

\begin{figure}
\center
\includegraphics{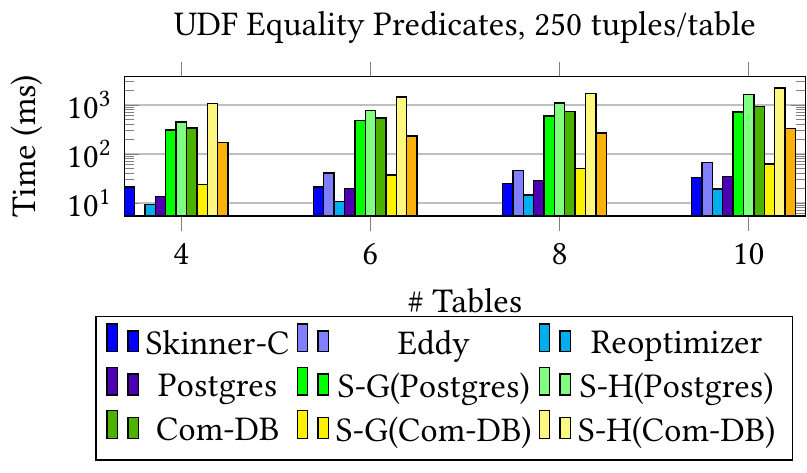}

\caption{Trivial Optimization benchmark.\label{easyFig1}}
\end{figure}

Our primary goal is to achieve robust query evaluation for corner cases. Still, we also consider scenarios where sophisticated optimization only adds overheads. Figure~\ref{easyFig1} shows results for the Trivial Optimization benchmark in which all query plans avoiding Cartesian products are equivalent. We are mostly interested in relative execution times obtained for the same execution engine with different optimization strategies. Clearly, optimizers that avoid any exploration perform best in this scenario. For the four baselines sharing the Java-based execution engine (Optimizer, Re-Optimizer, and Eddy), this is the standard optimizer. For the baselines that are based on existing DBMS, the original optimizer works best in each case. While robustness in corner cases clearly costs peak performance in trivial cases, the overheads are bounded.


\pgfplotsset{
   /pgfplots/bar  cycle  list/.style={/pgfplots/cycle  list={%
        {fill=blue,mark=none},%
        {fill=green,mark=none},%
        {fill=green!50,mark=none},%
        {fill=green!70!red,mark=none},%
        {fill=black,mark=none}
     }
   },
}

\begin{figure}[t]
\center
\includegraphics{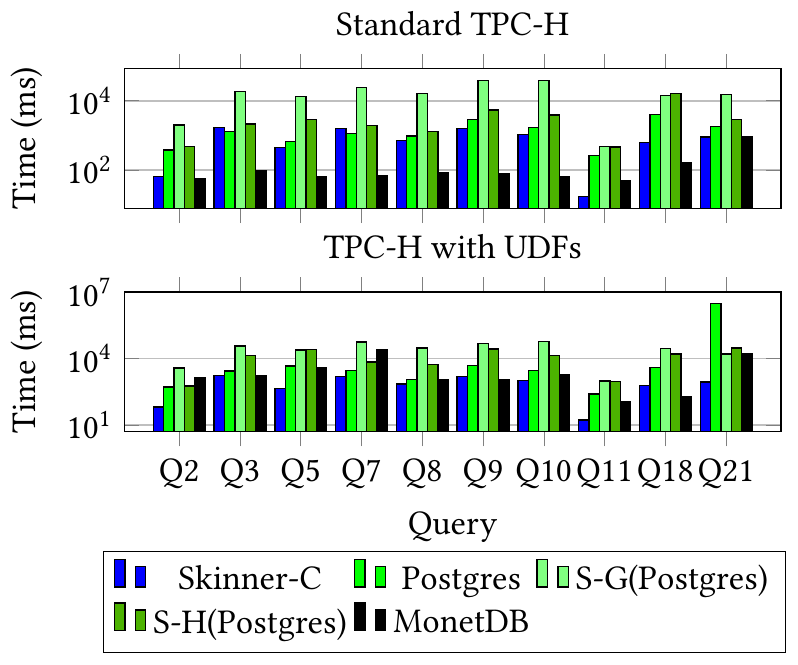}

\caption{Performance on TPC-H queries.\label{tpchFig}}
\end{figure}

Finally, we benchmark several baselines on the more complex queries of the TPC-H benchmark~\cite{TPC2013}. We restrict evaluated approaches to the ones where our current implementation supports the full set of TPC-H queries. Figure~\ref{tpchFig} reports processing times of ten TPC-H queries that join at least three tables. For each query and each approach, we calculate the relative overhead (i.e., query execution time of approach divided by execution time of best approach for this query). The ``Max.\ Rel.\'' column contains for each approach the maximal value over all queries. We study original TPC-H queries as well as a variant that makes optimization hard. The latter variant replaces all unary query predicates by user-defined functions. Those user-defined functions are semantically equivalent to the original predicate. They typically increase per-tuple evaluation overheads. Most importantly, however, they prevent the optimizer from choosing good query plans.

\begin{table}[b]
\caption{Result summary for TPC-H variants.\label{tpchTable}}
\begin{tabular}{llll}
\toprule[1pt]
\textbf{Scenario} & \textbf{Approach} & \textbf{Time (s)} & \textbf{Max.\ Rel.\ }\\
\midrule[1pt]
TPC-H & Skinner-C & 9 & 22 \\
\midrule
& Postgres & 15 & 37 \\
\midrule
& S-G(Postgres) & 182 & 594 \\
\midrule
& S-H(Postgres) & 38 & 97 \\
\midrule
& MonetDB & 2 & 3 \\
\midrule[1pt]
TPC-UDF & Skinner-C & 9& 3 \\
\midrule
& Postgres & 3,117 & 3,457 \\
\midrule
& S-G(Postgres) & 305 & 154 \\
\midrule
& S-H(Postgres) & 142 & 88 \\
\midrule
& MonetDB & 53 & 20 \\
\bottomrule[1pt]
\end{tabular}
\end{table}

The upper half of Figure~\ref{tpchFig} shows results on original TPC-H queries while the lower half reports on the UDF variant. Table~\ref{tpchTable} summarizes results, reporting total benchmark time as well as the maximal per-query time overhead (compared to the optimal execution time for that query over all baselines). MonetDB is the clear winner for standard queries (also note that MonetDB and SkinnerDB are column stores while Postgres is a row store). SkinnerDB achieves best performance on the UDF variant. Among the three Postgres-based approaches, the original DBMS performs best on standard cases. The hybrid approach performs reasonably on standard cases but reduces total execution time by an order of magnitude for the UDF scenario. We therefore succeed in trading peak performance in standard cases for robust performance in extreme cases.
\balance


\end{document}